\newif\ifarxiv
\newif\ifonecolumn
\newcolumntype{?}{!{\vrule width 1.1pt}}
\newtheorem{proposition}[]{Proposition}
\theoremstyle{remark}
\definecolor{green}{rgb}{0.0, 0.5, 0.0} 
\algrenewcommand\algorithmiccomment[1]{\hfill\textcolor{gray}{\# #1}}
\newcolumntype{?}{!{\vrule width 1pt}}
\DeclareAcronym{ea}{
	short = EA,
	long = evolutionary algorithm,
	long-plural-form = {evolutionary algorithms}}
\DeclareAcronym{snr}{
  short = SNR,
  long = signal-to-noise ratio,}
  \DeclareAcronym{pdf}{
	short = PDF,
	long = probability density function,}
  \DeclareAcronym{sar}{
  short = SAR,
  long = Synthetic aperture radar,}
\DeclareAcronym{insar}{
  short = InSAR,
  long = Interferometric Synthetic Aperture Radar,}
\DeclareAcronym{ao}{
	short = {AO},
	long = {alternating optimization},
	long-plural-form = {alternating optimizations}
}
\DeclareAcronym{mimo}{
	short = {MIMO},
	long = {multiple-input multiple-output}
}
\DeclareAcronym{uav}{
        short = {UAV},
        long = {unmanned aerial vehicle},
        long-plural-form = {unmanned aerial vehicles}
}
\DeclareAcronym{fdma}{
	short = {FDMA},
	long = {frequency-division multiple-access},
}
\DeclareAcronym{1d}{
	short = {1D},
	long = {one-dimensional},
}
\DeclareAcronym{islr}{
	short = {ISLR},
	long = {integrated sidelobe ratio},
}
\DeclareAcronym{pslr}{
	short = {PSLR},
	long = {peak sidelobe ratio},
}
\DeclareAcronym{3d}{
        short = {3D},
        long = {three-dimensional},
}
\DeclareAcronym{pso}{
	short = {PSO},
	long = {particle swarm optimization},
}
\DeclareAcronym{2d}{
        short = {2D},
        long = {two-dimensional},
}
\DeclareAcronym{dem}{
        short = {DEM},
        long = {digital elevation model},
}
\DeclareAcronym{moop}{
	short = {MOOP},
	long = {multi-objective optimization problem},
}
\DeclareAcronym{gs}{
        short = {GS},
        long = {ground station},
        long-plural-form = {ground stations}
}
\DeclareAcronym{los}{
        short = {LOS},
        long = {line-of-sight},
}
\DeclareAcronym{sca}{
        short = {SCA},
        long = {successive convex approximation},
}
\DeclareAcronym{nesz}{
        short = {NESZ},
        long = {noise equivalent sigma zero},
}
\DeclareAcronym{wrt}{
        short = {w.r.t.},
        long = {with respect to },
}
\DeclareAcronym{rhs}{
        short = {r.h.s},
        long = {right-hand side },
}
\DeclareAcronym{gmti}{
	short = {GMTI},
	long = {ground moving target indication},
}
\DeclareAcronym{lhs}{
        short = {l.h.s},
        long = {left-hand side },
}
\DeclareAcronym{bcd}{
	short = {BCD},
	long = {block coordinate descent},
}
\DeclareAcronym{hoa}{
        short = {HoA},
        long = {height of ambiguity},
}
\DeclareAcronym{ga}{
	short = {GA},
	long = {genetic algorithm},
	long-plural-form = {genetic algorithms},
}
\DeclareAcronym{cga}{
	short = {CGA},
	long = {Continuous Genetic Algorithm},
	long-plural-form = {continuous genetic algorithms},
}
\DeclareAcronym{sa}{
	short = {SA},
	long = {simulated annealing},
}
\DeclareAcronym{genocop}{
	short = {Genocop II},
	long = {Genetic Algorithm for Numerical Optimization of Constrained Problems},
}
\DeclareAcronym{drl}{
	short = {DRL},
	long = {deep reinforcement learning},
}
\DeclareAcronym{dqn}{
	short = {DQN},
	long = {deep Q networks},
}
\DeclareAcronym{ddpg}{
	short = {DDPG},
	long = {deep deterministic policy gradient},
}
\begin{document}
	\title{ Sensing Accuracy Optimization for Multi-UAV SAR Interferometry with Data Offloading \\\vspace{-2mm}
		\thanks{  This paper {was} presented in part at the IEEE International Conference on Communication,  Montréal, Canada, Jun. 2025 \cite{amine5}. This work was supported in part by the Deutsche Forschungsgemeinschaft (DFG, German Research Foundation) GRK 2680 – Project-ID 437847244.}
	}
	\ifonecolumn
	\author{\IEEEauthorblockN{Mohamed-Amine~Lahmeri\IEEEauthorrefmark{1},     Pouya Fakharizadeh\IEEEauthorrefmark{1}, Víctor Mustieles-Pérez\IEEEauthorrefmark{1}\IEEEauthorrefmark{2}, Martin Vossiek\IEEEauthorrefmark{1}, Gerhard Krieger\IEEEauthorrefmark{1}\IEEEauthorrefmark{2}, and
			Robert Schober\IEEEauthorrefmark{1}}\\ \vspace{-2mm}
		\IEEEauthorblockA{\IEEEauthorrefmark{1}Friedrich-Alexander-Universit\"at Erlangen-N\"urnberg (FAU), Germany\\
			\IEEEauthorrefmark{2}German Aerospace Center (DLR),  Microwaves and Radar Institute, Weßling, Germany\\
			\vspace{-8mm}}}
		\else
			\author{\IEEEauthorblockN{Mohamed-Amine~Lahmeri\IEEEauthorrefmark{1},     Pouya Fakharizadeh\IEEEauthorrefmark{1}, Víctor Mustieles-Pérez\IEEEauthorrefmark{1}\IEEEauthorrefmark{2}, Martin Vossiek\IEEEauthorrefmark{1}, \\ Gerhard Krieger\IEEEauthorrefmark{1}\IEEEauthorrefmark{2}, and
				Robert Schober\IEEEauthorrefmark{1}}\\ \vspace{-2mm}
			\IEEEauthorblockA{\IEEEauthorrefmark{1}Friedrich-Alexander-Universit\"at Erlangen-N\"urnberg (FAU), Germany\\
				\IEEEauthorrefmark{2}German Aerospace Center (DLR),  Microwaves and Radar Institute, Weßling, Germany\\
				\vspace{-8mm}}}
			\fi
	\maketitle
	\begin{abstract} 
The integration of unmanned aerial vehicles (UAVs) with radar imaging sensors has revolutionized the monitoring of dynamic and local Earth surface processes by enabling high-resolution and cost-effective remote sensing. This paper investigates the optimization of the sensing accuracy of a UAV swarm deployed to perform multi-baseline interferometric synthetic aperture radar (InSAR) sensing. In conventional single-baseline InSAR systems, two synthetic aperture radar (SAR) images of the same area acquired from two different angles are used to generate a digital elevation model (DEM) of the target area. Multi-baseline InSAR extends this concept by aggregating multiple acquisitions from different angles, thus significantly enhancing the vertical accuracy of the DEM.  The heavy computations required for this process are performed on the ground and, therefore, the radar data are transmitted in real time to a ground station (GS) via a frequency-division multiple access (FDMA) air-to-ground backhaul link. {This is the first work that studies the optimization of a UAV swarm deployed for multi-baseline InSAR applications. We focus} on {optimizing} the sensing precision by minimizing the height error of the averaged DEM while simultaneously ensuring sensing and communication quality-of-service (QoS) {requirements}. To this end, the UAV formation, velocity, and communication power are jointly optimized using evolutionary algorithms (EAs). Our approach is benchmarked against established optimization methods, including genetic algorithms (GAs), simulated annealing (SA), and deep reinforcement learning (DRL) techniques. Numerical results show that the proposed {method} outperforms these baseline schemes and achieves sub-decimeter vertical accuracy in several scenarios. These findings {underscore} the potential of coordinated UAV swarms for delivering high-precision and real-time Earth observations through radar interferometry.
	\end{abstract}
	\section{Introduction}\label{sec:introduction}
	\Ac{sar} is a well-established remote sensing technique that generates high-resolution \ac{2d} images by processing the scattered microwave signals reflected from a target area \cite{radar_book}. In \ac{sar} systems, the radar antenna moves along a predefined trajectory to enhance the spatial resolution of the final image through coherent processing \cite{radar_book}. While \ac{sar} has traditionally been employed on spaceborne and airborne platforms, recent advancements have enabled its integration into \acp{uav} for short-range and on-demand applications. Drone-based \ac{sar} systems are deemed a promising solution for monitoring dynamic environments and enabling real-time high-resolution \ac{sar} imaging \cite{experimental1,experimental2,insar_exp1,insar_exp2,insar_exp3}. In this context, various measurement campaigns using \ac{uav}-\ac{sar} networks have been conducted {for} different applications, such as climate monitoring~\cite{climate_change}, buried object detection~\cite{experimental1, experimental2}, and the generation of \acp{dem}~\cite{insar_exp1,insar_exp3}. {Compared to spaceborne and airborne carriers, equipping \ac{sar} onboard lightweight drones offers greater flexibility and new degrees of freedom, therefore, several studies have targeted the optimization of \ac{uav}-\ac{sar} systems.} In \cite{optimization1}, the authors formulated the resource allocation problem for \ac{uav}-\ac{sar} networks as a \ac{moop}, where the bistatic angles, radar resolution, and \ac{nesz} were considered in the objective function. In \cite{amine2, amine4}, the authors focused on maximizing the \ac{sar} coverage under multiple sensing and communication constraints. Additionally, the authors in \cite{sun2} studied the optimization of geosynchronous \ac{sar} systems and formulated the problem as a \ac{moop} to optimize the trajectory and resolution of the system.
	\subsection{Single-baseline \ac{insar}}
	\Ac{insar} is an advanced \ac{sar} technology that exploits the phase difference between two or more \ac{sar} acquisitions to estimate surface elevation and/or ground displacement \cite{cramer}. These acquisitions can be obtained using either a single platform at different times (temporal baseline) or multiple spatially separated antennas at the same time (spatial or across-track baseline, where the baseline represents the distance between the used sensors). In particular, single-baseline across-track \ac{insar} refers to the configuration where two antennas acquire data simultaneously from different spatial positions, resulting in a purely spatial baseline with no temporal decorrelation \cite{insar1}. Here, the baseline induces a phase difference between the received signals, which can be processed to generate an interferogram that contains for each pixel its corresponding phase difference. This interferogram {includes} information about the terrain elevation, enabling the generation of high-resolution \acp{dem} of the imaged scene \cite{added_krieger}. In this context, important performance metrics include the coherence, which represents the cross-correlation between the two compared \ac{sar} images, and the height error, which represents the vertical precision of the \ac{dem} \cite{coherence1}. 
	Across-track \ac{insar} has been extensively studied for airborne and spaceborne systems \cite{cramer}. However, its adaptation to \ac{uav}-based platforms remains relatively underexplored. While several experimental studies have demonstrated the feasibility of \ac{uav}-based \ac{insar} systems~\cite{insar_exp1,insar_exp2,insar_exp3}, there is a notable lack of systematic research on their performance optimization. Technical challenges such as platform trajectory {design}, payload capacity {constraints}, and the need for effective resource allocation schemes pose significant hurdles. Addressing these challenges is critical {for} advancing low-altitude drone-based \ac{insar} systems. In \cite{optimization2}, the authors studied repeat-pass \ac{uav}-based \ac{3d} \ac{insar} sensing, where metrics such as cross-range resolution, peak sidelobe ratio, integrated sidelobe ratio, and flight time were considered. Furthermore, the authors {of} \cite{optimization1} studied the resource allocation and optimization of multi-\ac{uav} \ac{sar} systems, where the spatial configuration, transmit power, and frequency band allocation were considered. In \cite{amine3,amine4}, the authors studied \ac{insar} coverage maximization under communication and sensing requirements for single-baseline \ac{uav}-based \ac{insar}.
	\subsection{Multi-baseline \ac{insar}} Single-baseline \ac{insar} systems face a fundamental trade-off between phase unwrapping reliability and vertical resolution. Here, phase unwrapping refers to the process of resolving the inherent $2\pi$ ambiguity in the measured phase to obtain absolute phase values, whereas the vertical resolution represents the precision with which elevation differences can be {estimated} in the generated \ac{dem} \cite{cramer}. On {the} one hand, short baselines provide high coherence and more reliable phase estimates, facilitating the phase unwrapping process. On the other hand, short baselines offer limited sensitivity to elevation, resulting in lower vertical resolution in the final \ac{dem}. Conversely, longer baselines improve the vertical resolution but suffer from increased decorrelation, making the phase unwrapping more challenging \cite{insar1}. Multi-baseline \ac{insar} addresses {these limitations} by combining multiple acquisition platforms or repeated observations to form several interferometric pairs with varying baseline lengths and orientations \cite{added_krieger2,added_krieger3}. This approach enables the generation of a diverse set of interferograms, allowing {\ac{insar} processing} to benefit from both short and long baselines. Compared to single-baseline \ac{insar}, the optimization of multi-baseline \ac{insar} is more challenging, primarily due to the use of multiple sensors. For instance, with $I>2$ sensors in operation, the system's degrees of freedom increase significantly, introducing intricate interdependencies. In fact, unlike single-baseline \ac{insar}, where only one baseline and one \ac{dem} are influenced by the sensor geometry, in a multi-baseline setup, adjusting the position of a single sensor simultaneously affects $I-1$ baselines and consequently, $I-1$ \acp{dem}. This strong mutual coupling between sensors, along with the high dimensionality of the problem, increases the overall computational complexity required for system optimization. Furthermore, the large volume of data generated by multiple sensors places additional strain on data transmission and processing. Although multi-baseline \ac{insar} is a well-established technique, especially in spaceborne systems \cite{added_krieger3}, it remains underexplored in the context of \acp{uav}. Additionally, the system geometry, defined by the spatial distribution of the antennas, strongly affects the primary peformance metric of the generated \ac{dem}, which is the height accuracy \cite{fusion_sigma_h}. Therefore, minimizing the \ac{dem} height error is an important yet challenging optimization problem that has not been explicitly formulated or addressed in prior work.
	\subsection{Main Contributions}
In this paper, we consider a \ac{uav} swarm equipped with \ac{sar} systems, designed to perform multi-baseline \ac{insar} sensing. Each \ac{uav} pair forms a single-baseline \ac{insar} system capable of generating a \ac{dem}. All \acp{dem} are then combined into a final high-precision \ac{dem} using averaging techniques. Our goal is to minimize the height error of the final \ac{dem} by jointly optimizing the \ac{uav} formation, velocity, and communication power allocation, subject to sensing and communication constraints. {To the best of our knowledge, this is the first paper to address the optimization of sensing accuracy in multi-\ac{uav} multi-baseline \ac{insar} systems.} Our main contributions can be summarized as follows:

\begin{itemize}
	\item We formulate the height error minimization problem for a multi-baseline InSAR system consisting of \( I \) \acp{uav} as a non-convex optimization problem, accounting for {practical} system constraints related to coverage, phase unwrapping performance, data offloading, and collision avoidance.
	
	\item To address this challenging problem, we propose a novel co-evolutionary algorithm in which two distinct species collaboratively explore the optimization search space. The original problem is decomposed into two simpler sub-problems, {where} each species {is} tasked with exploring the search space corresponding to its respective sub-problem. Non-parameterized \ac{pso} is then used as the evolutionary optimizer to simultaneously evolve both species.
	
	\item We compare the proposed algorithm with several established optimization methods, such as \acp{ga}, \ac{sa}, and \ac{drl}, and demonstrate its superiority across various system parameters.
	
	\item We analyze the potential of employing \ac{uav} networks to enhance the sensing precision and explore the critical interplay between sensing and communication requirements. {Furthermore, we show that optimizing the \ac{uav} formation results in height errors below 10 cm and identify the optimal number of sensors required for a given sensing objective.}
\end{itemize}
We note that this article extends the corresponding conference version~\cite{amine5}. In~\cite{amine5}, the analysis was limited to dual-baseline \ac{uav}-based systems, considering only two {pairs} of \ac{sar} acquisitions. Furthermore, instead of minimizing the actual height error, an upper bound was derived in \cite{amine5} and employed to simplify the analysis. Additionally, in~\cite{amine5}, the \ac{uav} swarm velocity was assumed to be fixed and, therefore, was not optimized.

The remainder of this paper is organized as follows. In Section \ref{sec:system_model}, we present the system model. In Section \ref{sec:problem_formulation}, the optimization problem for the minimization of the DEM height error is formulated. The proposed solution is presented in Section \ref{sec:solution}. In Section \ref{sec:simulation_results}, the performance of the proposed method is {evaluated} based on numerical simulations. Finally, conclusions are drawn in Section \ref{sec:conclusion}.

{\em Notations}:
In this paper, lower-case letters $x$ refer to scalar {variables}, while boldface lower-case letters $\mathbf{x}$ denote vectors.  $\{a, ..., b\}$ denotes the set of all integers between $a$ and $b$.  $|\cdot|$ denotes the absolute value operator, while $[\cdot]^+$ denotes $\max(0,\cdot)$. $\mathbb{R}^{N}$ represents the set of all $N$-dimensional vectors with real-valued entries. For a vector {$\mathbf{x}=(x_1,...,x_N)^T\in\mathbb{R}^{N}$}, $||\mathbf{x}||_2$ denotes the Euclidean norm, whereas  $\mathbf{x}^T$ stands for the  transpose of $\mathbf{x}$. 
For a given set \( \mathcal{S} \), \( \text{card}(\mathcal{S}) \) refers to the number of elements in the set.  For two vectors \( \mathbf{a} \in \mathbb{R}^{N} \) and \( \mathbf{b} \in \mathbb{R}^{N} \) such that $a[n] \leq b[n], \forall n \in \{1, \ldots, N\}$, \( \mathcal{U}(\mathbf{a}, \mathbf{b}) \) refers to the uniform distribution over the range defined by \( \mathbf{a} \) and \( \mathbf{b} \), i.e., a uniform distribution between the components of the two vectors. For an integer \( N \), \( \mathbf{0}_N \) and \( \mathbf{1}_N \) refer to the all-zero and all-one vectors of size \( N \), respectively.
	\section{System Model} \label{sec:system_model}
	
	\begin{figure}
		\centering
		\ifonecolumn
		\includegraphics[width=4in]{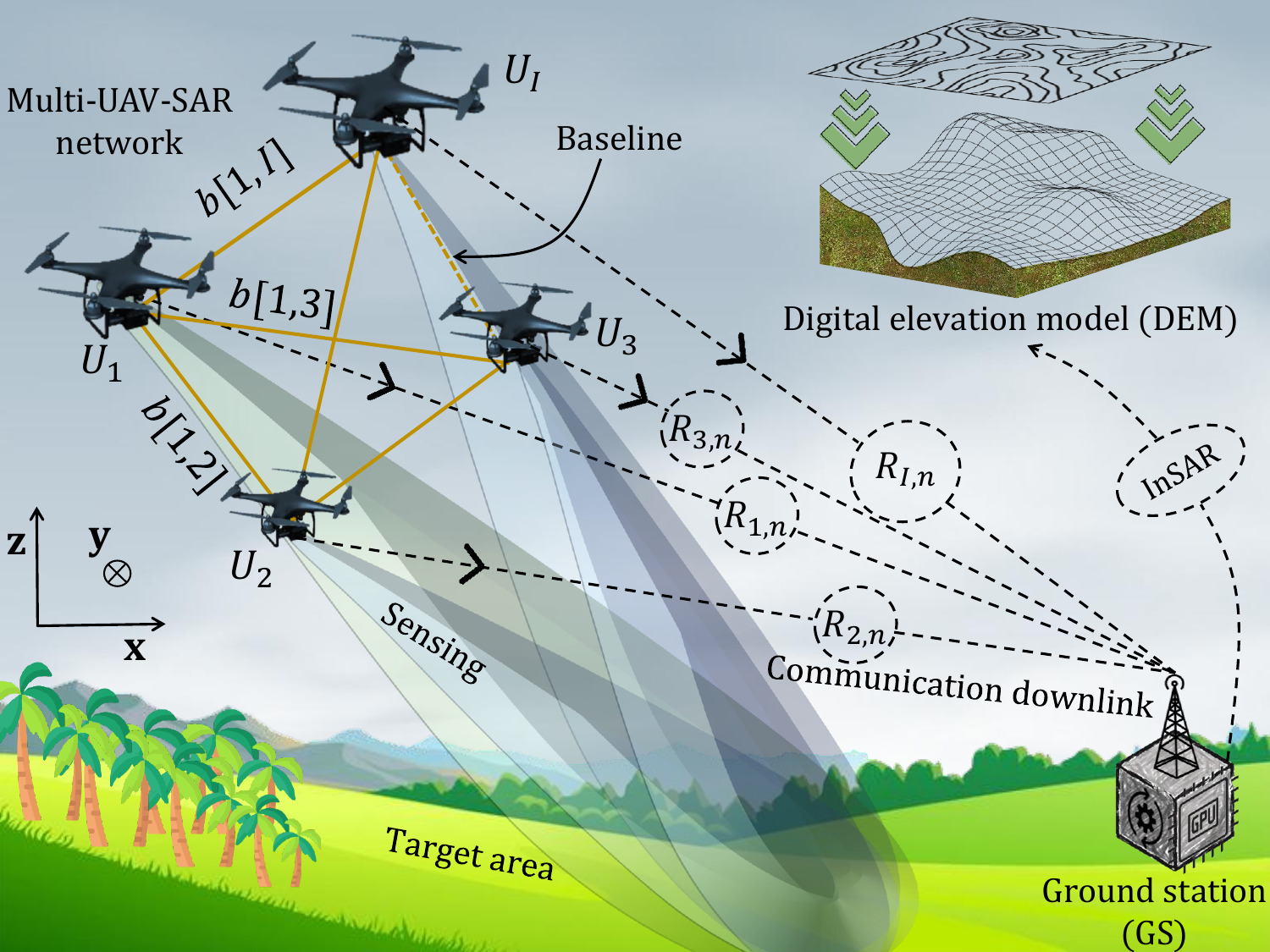}
		\else
		\includegraphics[width=0.9\columnwidth]{figures/SystemModel.pdf}
		\fi
		\caption{ Multi-baseline \ac{insar} sensing system comprising $I$ \ac{uav}-\ac{sar} systems {as well as} a \ac{gs} for real-time data offloading. The \ac{uav} swarm is moving perpendicular to the plane of the figure. }
		\label{fig:system-model}
	\end{figure} 
	We consider $I \in \mathbb{N}, I\geq 3$ rotary-wing \acp{uav}, {denoted} by $U_i, i \in \mathcal{I}=\{1,...,I\}$, performing \ac{insar} sensing over a target area. The \ac{uav} swarm operates in standard \ac{insar} mode \cite{cramer}, where $U_1$, the master drone, transmits and receives radar echoes, while $U_i, i \in\mathcal{I}\setminus\{1\}$, the slave drones, only receive. This is equivalent to setting $P_{\mathrm{rad},i}=0, \forall i \in \mathcal{I}\setminus\{1\}$, where $P_{\mathrm{rad},i}, \forall i \in \mathcal{I},$ is the radar transmit power of \ac{uav} $U_i$. We use a \ac{3d} coordinate system, where the  $x$-,  $y$-, and $z$-{axes} represent the cross-range {direction}, the azimuth {direction}, and the altitude, respectively. The mission time $T$ is divided into $N$ slots {of duration $\delta_t$}, with $T = N\cdot\delta_t$, and set $\mathcal{N}=\{1,...,N\}$ denotes the set of all time slots. The drone swarm {forms} a multi-baseline interferometer with $\frac{I(I-1)}{2}$  observations acquired by all existing \ac{uav} pairs, see Figure \ref{fig:system-model}.  The considered \ac{uav}-\ac{sar} systems operate in stripmap mode \cite{book1} and fly at a constant but {adjustable} velocity, $v_y$, following a linear trajectory  that is parallel to a line, denoted by $l_t$, {which is} parallel to {the} $y$-axis and {passes} in time slot $n \in \mathcal{N}$ through reference point $\mathbf{p}_t[n]=(x_t,y[n],0)^T \in \mathbb{R}^3$. In other words, line $l_t$ is the swath center line. The position of $U_i$ {in} time slot $n \in \mathcal{N}$ is $\mathbf{q}_i[n]=(x_i,y[n],z_i)^T$, with the $y$-axis position vector $\mathbf{y}=(y[1]=0,y[2], ..., y[N])^T\in\mathbb{R}^{N}$ given by: 
	\begin{align}
		y[n+1]=y[n]+v_y\delta_t, \forall n \in \mathcal{N}\setminus\{N\}.
	\end{align}
	For simplicity, we denote the position of $U_i$ in the across-track plane (i.e., $xz-$plane) by $\mathbf{q}_i=(x_i,z_i)^T \in \mathbb{R}^2, \forall i \in \mathcal{I}$. In \ac{sar} interferometry, the distance between two  sensors $U_i$ and $U_j, i\neq j,$ is a key metric and is referred to as the interferometric baseline. The {considered} system can perform multi-baseline \ac{insar} due to the existence of $\frac{I(I-1)}{2}$ \ac{uav} pairs, i.e., baselines.  The interferometric baseline for \ac{uav} pair $(U_i,U_j), i < j, i,j \in \mathcal{I},$ is denoted by $b[i,j]$ and given by:
	\begin{align}
		b[i,j] = ||\mathbf{q}_i -\mathbf{q}_j||_2, \forall\, i, j \in \mathcal{I},\; i < j.
	\end{align}
	The perpendicular baseline, {denoted by $b_{\bot}$,} is the magnitude of the projection of $(U_i,U_j)$'s baseline vector perpendicular to $U_i$'s \ac{los} to $\mathbf{p}_t[n]$ and is given by:
	\begin{equation}\label{eq:perpendicular_baseline}
		b_{\bot}[i,j]=
		b[i,j] \cos\Big({\theta_i}- \alpha[i,j]\Big), \forall\, i, j \in \mathcal{I},\; i < j,
	\end{equation} 
	where $\theta_i$ is $U_i$'s look angle  and is adjusted {such that} the beam footprint is centered around $\mathbf{p}_t$ (i.e., the radar swath is centered \ac{wrt} $l_t$):
	\begin{equation}
		\theta_i= \arctan\left(\frac{x_t-x_i}{z_i}\right), \forall i \in \mathcal{I}.
	\end{equation} 
	Here, $\alpha[i,j]$ is the angle between the {baseline vector of $(U_i,U_j)$} and the horizontal plane and is given by: 
	\begin{equation}
		\alpha[i,j]=\arctan\left(\frac{z_j-z_i}{x_j-x_i}\right), \forall\, i, j \in \mathcal{I},\; i < j.
	\end{equation}
	\subsection{\ac{insar} Performance}
	Next, we introduce the relevant \ac{insar} sensing performance metrics.
	\subsubsection{\ac{insar} Coverage}
	Let $r_i, i \in \mathcal{I}$, {denote} $U_i$'s slant range \ac{wrt} $ \mathbf{p}_t[n]$. {The} slant range is independent of time and is given by:
	\begin{align}
		r_i= \sqrt{ (x_i -x_t)^2 +  z_i^2  }, \forall i \in  \mathcal{I}.
	\end{align} The swath width of $U_i$ can be approximated as follows \cite{book1}:
	\begin{align}\label{eq:swath_width}
		S_i= \frac{\Theta_{\rm 3 dB}r_i }{\cos(\theta_i)},\forall i \in \mathcal{I},
	\end{align}
	where   $\Theta_{\mathrm{3dB}}$ is the -3 dB beamwidth in elevation. The total \ac{insar} coverage can be approximated as follows: 
	\begin{equation}
		C_{\mathrm{tot}}=N \min\limits_{i \in \mathcal{I}} \left(S_i\right) v_y \delta_t.
	\end{equation}
	\subsubsection{\Ac{insar} Coherence} A key performance metric for \ac{insar} is the coherence, which represents the normalized cross-correlation between a pair of \ac{sar} images. For the \ac{insar} pair formed by the images acquired by the pair $(U_i, U_j)$, the total coherence is denoted by $0 \leq \gamma[i,j] \leq 1$. A coherence value close to 1 indicates high similarity between the two images, which is essential for accurate phase measurements and reliable interferometric analysis. The total coherence can be decomposed into several decorrelation sources as follows:
	\begin{equation}
		\gamma[i,j]= \gamma_{\mathrm{Rg}}[i,j] \gamma_{\mathrm{SNR}}[i,j]\gamma_{\rm other}[i,j], \forall\, i, j \in \mathcal{I},\; i < j.
	\end{equation} 
	Here, for a given interferometric pair $(U_i,U_j)$,  $\gamma_{\mathrm{Rg}}[i,j]\in [0, 1],$ is the baseline  decorrelation, $\gamma_{\mathrm{SNR}}[i,j]\in [0, 1]$ is the \ac{snr} decorrelation, and $\gamma_{\rm other}[i,j]\in [0, 1]$ represents the contributions from all other decorrelation sources, which are independent of the \ac{uav} formation. The \ac{snr} decorrelation of pair $(U_i,U_j)$ is affected by {the} \acp{snr} {of both \acp{uav}} and is given by  \cite{snr_equation}:
{\small	\begin{align} \label{eq:snr_decorrelation}
	\gamma_{\mathrm{SNR}}[i,j]= \frac{1}{\sqrt{\left(1+\mathrm{SNR}^{-1}_{i}\right)}}\frac{1}{\sqrt{\left(1+\mathrm{SNR}^{-1}_{j}\right)}}, \forall\, i, j \in \mathcal{I},\; i < j,
	\end{align}}
	where $\mathrm{SNR}_1$ denotes the \ac{snr} of the mono-static acquisition received by $U_1$ given by \cite{snr_equation}:
	\begin{equation}\label{eq:monostatic_snr}
		\mathrm{SNR}_{1}=\frac{\sigma_0 P_{\mathrm{rad},1}\; G_t\; G_r \lambda^3 c \tau_p \mathrm{PRF} }{4^4 \pi^3  v_y \sin(\theta_1) r_1^3 k_b T_{\mathrm{sys}}  \; B_{\mathrm{Rg}} \; F \; L}.
	\end{equation}
	 Here, $\sigma_0$ is the normalized backscatter coefficient, $P_{\mathrm{rad},1}$, is the radar transmit power of \ac{uav} $U_1$, $G_t$ and $G_r$ are the transmit and receive antenna gains, respectively, $\lambda$ is the radar wavelength, $c$ is the speed of light, $\tau_p$ is the pulse duration, $\mathrm{PRF}$ is the pulse repetition frequency, $k_b$ is the Boltzmann constant, $T_{\mathrm{sys}}$ is the receiver noise reference temperature, $B_{\mathrm{Rg}}$ is the bandwidth of the radar pulse, $F$ is the noise figure of the receiver, and  $L$ represents the total radar losses. Assuming small bi-static angles $|\theta_1-\theta_i|, \forall i$, which is typically the case for \ac{insar} applications \cite{cramer}, the bi-static \ac{snr} for $U_i, i\geq 2$, denoted by $\mathrm{SNR}_i$, {can be} approximated as follows \cite{amine4}:
	\begin{equation}\label{eq:bistatic_snr}
		\mathrm{SNR}_{i}\approx \frac{\sigma_0 P_{\mathrm{rad},1}\; G_t\; G_r \lambda^3 c \tau_p \mathrm{PRF} }{4^4 \pi^3  v_y \sin(\theta_i) r_1^2 r_i k_b T_{\mathrm{sys}}  \; B_{\mathrm{Rg}} \; F \; L}, \forall i \in \mathcal{I}\setminus\{1\}.
	\end{equation}
	Furthermore, the baseline decorrelation reflects the loss of coherence caused by the different angles used for the acquisition of both \ac{insar} images, {and can be expressed as follows} \cite{victor2}: 
	\begin{equation} \gamma_{\mathrm{Rg}}[i,j]=\frac{1}{B_p} \left[ \frac{2+B_p}{1+\mathcal{X}[i,j]} -\frac{2-B_p}{1+\mathcal{X}^{-1}[i,j]} \right],\label{eq:baseline_decorrelation}
	\end{equation}
	where $B_{ p}=\frac{B_{\mathrm{Rg}}}{f_0}$ is the fractional bandwidth, $f_0$ is the radar center frequency, and function $\mathcal{X}[i,j]$ is {given} by \cite{victor2}:
	\ifonecolumn 
	\begin{equation}
		\mathcal{X}[i,j]=\frac{ \sin(\max(\theta_i,\theta_j))}{ \frac{1}{2} \left(\sin(\theta_i)+\sin(\theta_j)\right)}, \forall\, i, j \in \mathcal{I},\; i < j. \label{eq:baseline_decorrelation_X}
	\end{equation} 
	\else
	\begin{align}
		\mathcal{X}[i,j]=\frac{ \sin(\max(\theta_i,\theta_j))}{ \frac{1}{2} \left(\sin(\theta_i)+\sin(\theta_j)\right)}, \forall\, i, j \in \mathcal{I},\; i < j. \label{eq:baseline_decorrelation_X}
	\end{align}
	\fi
	
	\subsubsection{Height of Ambiguity (HoA)} 
	To extract the height of ground targets, \ac{insar} relies on the phase difference between images acquired by  $(U_i, U_j), i < j$. Assuming that sensors $U_i$ and $U_j$ measure phases $\phi_i$ and $\phi_j$, respectively, the interferometric phase is given by $\Phi_{i,j} = \phi_i - \phi_j$. The elevation of a target above ground introduces a phase shift in the measured phase $\Phi_{i,j}$, making it possible to construct a \ac{dem} of the observed area. However, since the phase $\Phi_{i,j}$ is $2\pi$-cyclic, the corresponding height information is inherently ambiguous. The height difference corresponding to a complete $2\pi$ phase cycle is known as the \ac{hoa}. The \ac{hoa} is a key  parameter, as it determines the system’s sensitivity to topographic height variations \cite{snr_equation}. The \ac{hoa} for sensor pair $(U_i, U_j)$ is given by \cite{snr_equation}:
	\begin{align}\label{eq:height_of_ambiguity}
		h_{\mathrm{amb}}[i,j]=\frac{\lambda r_i \sin(\theta_i)}{ b_{\perp}[i,j]}, \forall\, i, j \in \mathcal{I},\; i < j.
	\end{align}
	\subsubsection{{\ac{dem}} Height Accuracy}
	The height error of {the \ac{dem} referes to the precision with which the elevation of ground targets {can be} estimated. The height error  {acquired by} \ac{insar}} pair $(U_i,U_j)$ is given by \cite{snr_equation}:
	\begin{equation} \label{eq:height_error}
		\sigma_{h}[i,j] =h_{\mathrm{amb}}[i,j] \frac{\sigma_{\Phi}[i,j] }{2\pi}, \forall\, i, j \in \mathcal{I},\; i < j,
	\end{equation}
	where {$\sigma_{\Phi}$} is the random error in the interferometric phase, whose standard deviation can be approximated in the case of high interferometric coherences by the Cramér–Rao bound  \cite{cramer}: 
	\begin{equation}\label{eq:phase_error}
		\sigma_{\Phi}[i,j]=\frac{1}{\gamma[i,j]} \sqrt{ \frac{1-\gamma^2[i,j]}{2n_L}}, \forall\, i, j \in \mathcal{I},\; i < j,
	\end{equation}
	{where $n_L$ is the number of independent looks {employed, i.e.,} {$n_L$} adjacent pixels of the interferogram {are averaged} to improve the phase estimation \cite{cramer}.}\par
	 {To enhance accuracy,} fusion of the $\frac{I(I-1)}{2}$ \ac{insar} {\acp{dem}} is performed based on inverse-variance weighting, such that the {height of} an arbitrary target {estimated} by \ac{insar} pair $(U_i,U_j), i, j \in \mathcal{I},\; i < j$, denoted by $h[i,j]$, {is} weighted by $\beta[i,j]=\frac{1}{	\sigma^2_{h}[i,j]}$, and {averaged} as $\frac{\sum\limits_{i<j \in \mathcal{I}}h[i,j] \beta[i,j]}{\sum\limits_{i<j \in \mathcal{I}}\beta[i,j]}$. The final height error of the fused \ac{dem} is {given} by \cite{fusion_sigma_h}: 
	\begin{align}
		\sigma_{h}= \sqrt{\frac{\sum\limits_{ i<j \in \mathcal{I}}\beta^2[i,j]\sigma^2_{h}[i,j]}{\left(\sum\limits_{i<j \in \mathcal{I}}\beta[i,j]\right)^2}}. \label{eq:final_height_error}
	\end{align}
	\subsection{Communication Performance}
	We consider real-time offloading of the radar data to a \ac{gs}, where all \acp{uav} employ \ac{fdma}. The  instantaneous   communication transmit power  consumed by \ac{uav} $U_i$ is given by $\mathbf{P}_{\mathrm{com},i}=(P_{\mathrm{com},i}[1],...,P_{\mathrm{com},i}[N])^T \in \mathbb{R}^N, i \in \mathcal{I}$.
	We denote the location of the \ac{gs} by $\mathbf{g}= (g_x, g_y, g_z)^T \in \mathbb{R}^3$ and the distance from $U_i$ to the \ac{gs} by    $d_{i,n} = ||\mathbf{q}_i[n]-\mathbf{g} ||_2, \forall i \in \mathcal{I}, \forall n \in \mathcal{N}.$ Thus, {adopting} the free-space path loss model and \ac{fdma}, the
	instantaneous throughput  from $U_i, \forall i \in \mathcal{I},$ to the \ac{gs} is given by:
	\begin{align}
		&R_{i,n}= B_{c,i} \; \log_2\left(1+\frac{P_{\mathrm{com},i}[n] \;\beta_{c,i}}{d_{i,n}^2}\right), \forall i \in \mathcal{I}, \forall n \in \mathcal{N},
	\end{align}
	where  {$B_{c,i}$ is $U_i$'s fixed} communication bandwidth and $\beta_{c,i}$ is {the} reference channel gain\footnote{The reference channel gain is the channel power gain at a reference distance of 1 m.} divided by the noise variance. The sensing data rate generated by {$U_i$'s \ac{sar} system} can be approximated as follows \cite{amine4}: 
	\ifonecolumn
	\begin{equation}
		R_{\mathrm{min},i}=\frac{n_B B_{\mathrm{Rg}} \mathrm{PRF}}{c}\Big(c \tau_p+\frac{z_i}{\cos(\theta_i+\frac{\Theta_{\mathrm{3dB}}}{2})}- \frac{z_i}{\cos(\theta_i-\frac{\Theta_{\mathrm{3dB}}}{2})}\Big), \forall i \in \mathcal{I},
	\end{equation}
	\else
	\begin{align}
		&R_{\mathrm{min},i}=\frac{n_B B_{\mathrm{Rg}} \mathrm{PRF}}{c}\Big(c \tau_p+\frac{z_i}{\cos(\theta_i+\frac{\Theta_{\mathrm{3dB}}}{2})}-\\ \notag &\frac{z_i}{\cos(\theta_i-\frac{\Theta_{\mathrm{3dB}}}{2})}\Big), \forall i \in \mathcal{I},
	\end{align}
	\fi
	where $n_B$ is the number of bits per complex sample. 
	\subsection{Energy Consumption} 
	The propulsion power consumed by each drone of the \ac{uav} swarm, flying with steady speed $v_y$, is given by \cite{propulsion}: 
	\ifonecolumn
	\begin{equation}\label{eq:propulsion_power}
		P_{\mathrm{prop}}=  P_0 \left(1+\frac{3v_y^2}{U^2_{\mathrm{tip}}}\right)+P_I\left( \sqrt{1+\frac{v^4_y}{4v_0^4}}-\frac{v^2_y}{2v_0^2}\right)^{\frac{1}{2}}+\frac{1}{2}d_0 \rho s A_e v_y^3.
	\end{equation}
	\else 
	\begin{gather}\label{eq:propulsion_power}
		P_{\mathrm{prop}}=  P_0 \left(1+\frac{3v_y^2}{U^2_{\mathrm{tip}}}\right)+P_I\left( \sqrt{1+\frac{v^4_y}{4v_0^4}}-\frac{v^2_y}{2v_0^2}\right)^{\frac{1}{2}}+\notag\\ \frac{1}{2}d_0 \rho s A_e v_y^3.
	\end{gather}
	\fi
	Here, $P_0=\frac{\delta_u}{8}\rho s A_e \Omega^3 R^3$ and  $P_I=(1+k_u) \frac{W_u^{\frac{3}{2}}}{\sqrt{2\rho A_e}}$ are two constants, $v_0=\sqrt{\frac{W_u}{2 \rho A_e}}$ is the mean rotor induced velocity {while hovering}, $U_{\mathrm{tip}}$ is the tip speed of the rotor blade, $d_0$ is the fuselage drag ratio,  $\delta_u$ is the profile drag coefficient, $\rho$ is the air density, $s$ is the rotor solidity, $A_e$ is the rotor disc area, $\Omega$ is the blade angular velocity, $R$ is the rotor radius, $k_u$ is a  correction factor, and $W_u$ is the aircraft weight in Newton. The total energy consumed by \ac{uav} $U_i$ can be expressed as follows:
	\begin{equation}
		E_i= \underbrace{T P_{\mathrm{prop}}}_{\text{Operation}}+\underbrace{T P_{\mathrm{rad},i}}_{\text{Sensing}}+ \underbrace{\sum\limits_{n=0}^{N-1}\delta_t P_{\mathrm{com},i}[n]}_{\text{Offloading}}, \forall i \in \mathcal{I}.
	\end{equation}
	\section{Problem Formulation} \label{sec:problem_formulation}
	In this paper, we aim to minimize the height error of the final \ac{dem} $\sigma_h$ by jointly optimizing the \ac{uav} formation $\mathcal{Q}=\{\mathbf{q}_i,\forall i \in \mathcal{I}\}$, the \ac{uav} swarm velocity $v_y$, and the instantaneous communication transmit powers  $\mathcal{P}=\{\mathbf{P}_{\mathrm{com},i}, \forall i \in \mathcal{I}\}$, while satisfying quality-of-service constraints related to sensing and communications. To this end, we formulate the following optimization problem: 
	\begin{alignat*}{2} 
		&(\mathrm{P}):\min_{\mathcal{Q},\mathcal{P}, v_y} \hspace{3mm}  \sigma_{h}  & \qquad&  \\
		\text{s.t.} \hspace{3mm} &\mathrm{C1}: z_{\mathrm{min}} \leq z_i \leq z_{\mathrm{ max}}, \forall i  \in \mathcal{I},               &      &  \\ &\mathrm{C2}: \theta_{\mathrm{min}} \leq \theta_i \leq \theta_{\mathrm{max}}, \forall i \in \mathcal{I},            &      &  \\ & \mathrm{C3}: v_{\mathrm{min}} \leq v_y \leq v_{\mathrm{max}},          &      &  \\
		& \mathrm{C4}: 0 \leq P_{\mathrm{com},i}[n]  \leq P_{\mathrm{com}}^{\mathrm{max}}, \forall i \in \mathcal{I}, \forall n \in \mathcal{N},    & &\\
		&  \mathrm{C5}: b[i,j] \geq d_{\mathrm{min}},\forall i < j  \in \mathcal{I},  &      &     
		\\
		&  \mathrm{C6}: C_{\mathrm{tot}}\geq C_{\rm min},    &      &     
		\\
		&    \mathrm{C7}: h_{\mathrm{amb}}[i,j]\geq  h_{\mathrm{amb}}^{\mathrm{min}}, \forall (i,j) \in \mathcal{I}_{\mathrm{phase}},             &      & \\
		& \mathrm{C8}: R_{i,n} \geq R_{\mathrm{min},i}, \forall i \in \mathcal{I}, \forall n \in \mathcal{N},       & &  \\
		& \mathrm{C9}: E_i \leq E_{\rm max}, \forall i \in \mathcal{I}.   & &  
	\end{alignat*}
	Constraint $\mathrm{C1}$ specifies the minimum and maximum allowed flying altitude, denoted by $z_{\mathrm{min}}$ and $z_{\mathrm{max}}$,  respectively. Constraint $\mathrm{C2}$ specifies the minimum and maximum slave look angle, denoted by $\theta_{\rm min}$ and $\theta_{\rm max}$, respectively. Similarly, constraint $\mathrm{C3}$ defines the minimum and maximum allowed velocity for the \ac{uav} swarm, denoted by $v_{\mathrm{min}}$ and $v_{\mathrm{max}}$, respectively. Constraint $\mathrm{C4}$ imposes a maximum communication transmit power,  $P_{\mathrm{com}}^{\mathrm{max}}$. Constraint $\mathrm{C5}$ enforces a minimum safety distance $d_{\mathrm{min}}$  between {any two} \acp{uav}. Constraint $\mathrm{C6}$ imposes a minimum \ac{sar} coverage $C_{\mathrm{min}}$.  Constraint $\mathrm{C7}$ imposes a minimum \ac{hoa}, $h_{\mathrm{amb}}^{\mathrm{min}}$, {required for} phase unwrapping \cite{coherence1}. {Here, we assume that $M$ baselines are required to satisfy the phase unwrapping constraint. Consequently, this constraint is enforced only for a subset of drone pairs, denoted by $\mathcal{I}_{\mathrm{phase}} = \{(i,j) \mid i, j \in \mathcal{I},\, i < j\}$, where $\mathrm{card}(\mathcal{I}_{\mathrm{phase}}) = M$ and $M \leq \frac{I(I-1)}{2}$. Note that the special case $M = \frac{I(I-1)}{2}$ corresponds to enforcing the phase unwrapping constraint for all baselines.} Constraint $\mathrm{C8}$ ensures {the} minimum {required} data rate for sensor $U_i$, denoted by $R_{\mathrm{min},i}, \forall i \in \mathcal{I}$. Constraint $\mathrm{C9}$ limits the  total energy consumed by \ac{uav} $U_i$ to  $E_{\rm max}, \forall i \in \mathcal{I}$.\par 
	Problem $\mathrm{(P)}$ is a highly challenging non-convex optimization problem. First, the {expression for} objective function $\sigma_h$ {is} complex involving multiple trigonometric functions and coupling between the \ac{uav} positions $\mathcal{Q}$ and the velocity of the \ac{uav} swarm $v_y$. Second, constraints $\mathrm{C2}$, $\mathrm{C6}$, and $\mathrm{C7}$ are non-convex due to the trigonometric functions {involved}. Third, constraint  $\mathrm{C8}$ is non-convex as the minimum data rate, $R_{\mathrm{min},i}$, varies with the position of drone $U_i, \forall i$. Additionally, constraint $\mathrm{C5}$ imposes a lower bound on a Euclidean distance and constraint $\mathrm{C9}$ {involves} a non-convex \ac{uav} propulsion power model, thereby making both constraints non-convex. Due to the aforementioned reasons and the problem dimension, given by $\mathrm{card}(\mathcal{Q})+\mathrm{card}(\mathcal{P})+1=N I +2I +1$, problem $\mathrm{(P)}$ is difficult to handle and, therefore, {its solution requires the application of} advanced optimization techniques. 
	
	\section{Solution of the Optimization Problem}\label{sec:solution}
	Given its non-convexity, non-monotonicity, and high dimensionality, traditional non-convex and monotonic optimization methods  {are not suitable for tackling} problem $\mathrm{(P)}$. Therefore, we adopt stochastic optimization techniques, employing a population-based \ac{ea} \cite{evolutionary_algorithms2}.
	\subsection{Co-evolutionary Framework}
	\ifonecolumn
	\begin{algorithm}[t]
		\caption{Parallel PSO-based Co-evolutionary Algorithm}
		\label{alg:coevolution}
		\begin{algorithmic}[1] 
			\Procedure{Co-evolution}{}
			\Statex \textbf{Output:} Solution to problem $\mathrm{(P)}$
			\Statex \hspace{5mm} \textbf{Initialization:}
			\State Initialize iteration number $k_2 \gets 1$
			\State Initialize random population $S_2$ of size $D_2$ and random PSO velocities
			\State Evaluate the initial population $S_2$ using Lines~9--11 of this algorithm.
			\State \label{line:best_global1}Set the local and global best particles and their respective fitness
			\While{$k_2 \leq K_2$}
			\State Update PSO velocities $v_{\mathrm{PSO}, d_2}^{(k_2)}, \forall d_2$, using (\ref{eq:pso_velocity_update_s2}) and (\ref{eq:pso_velocity_wall_s2})
			\State Update population $S_2$; component $l_{d_2}^{(k_2)}[1]$ is updated using (\ref{eq:update_population_s2}) whereas 
			\Statex \hspace{5mm}\hspace{5mm} $\left(l_{d_2}^{(k_2)}[2], \ldots,l_{d_2}^{(k_2)}[NI+1]\right)^T$ are updated based on \textbf{Proposition}~\ref{prop:solution_communication_power}
			\State \textbf{do in parallel for each} $d_2 \in \{1, \ldots, D_2\}$
			\State \hspace{\algorithmicindent} Evolve the associated population $S_{1,d_2}$ by $K_1$ generations using \textbf{Algorithm} \ref{alg:pso} to get  
			\Statex \hspace{\algorithmicindent} \hspace{\algorithmicindent}\hspace{\algorithmicindent}solution: $\mathbf{p}_{\mathrm{best}, d_2}^{(K_1)} \gets \texttt{PSO}(\mathbf{l}_{d_2}^{(k_2)})$
			\State Evaluate the fitness of each $\mathbf{l}_{d_2}^{(k_2)}$ given $\mathbf{p}_{\mathrm{best}, d_2}^{(K_1)}$ using (\ref{eq:fitness_s2})
			\State \label{line:best_global2}Update current local and global best particles in $S_2$ and their respective fitness
			\State Increment iteration number: $k_2 \gets k_2 + 1$
			\EndWhile
			\State \Return best particle of $S_2$ and its associated best particle from species $\{S_{1,d_2}\}_{d_2=1}^{d_2=D_2}$: $\{\mathbf{l}_{\mathrm{best}}^{(K_2)},\mathbf{p}_{\mathrm{best}, d_2}^{(K_1)}\}$.
			\EndProcedure
		\end{algorithmic}
	\end{algorithm}
	\else\begin{algorithm}[t]
		\caption{Parallel PSO-based Co-evolutionary Algorithm}
		\label{alg:coevolution}
		\begin{algorithmic}[1] 
			\Procedure{Co-evolution}{}
			\Statex \textbf{Output:} Solution to problem $\mathrm{(P)}$
			\Statex \hspace{5mm} \textbf{Initialization:}
			\State Initialize iteration number $k_2 \gets 1$
			\State Initialize random population $S_2$ of size $D_2$ and random \Statex\hspace{\algorithmicindent}PSO velocities
			\State Evaluate the initial population $S_2$ using Lines~9--11 of \Statex\hspace{\algorithmicindent}this algorithm.
			\State \label{line:best_global1}Set the local and global best particles and their \Statex\hspace{\algorithmicindent}respective fitness
			\While{$k_2 \leq K_2$}
			\State Update PSO velocities $v_{\mathrm{PSO}, d_2}^{(k_2)}, \forall d_2$, using (\ref{eq:pso_velocity_update_s2}) \Statex\hspace{\algorithmicindent}\hspace{\algorithmicindent}and (\ref{eq:pso_velocity_wall_s2})
			\State Update population $S_2$; component $l_{d_2}^{(k_2)}[1]$ is up-  \Statex\hspace{\algorithmicindent}\hspace{\algorithmicindent}dated using (\ref{eq:update_population_s2}) whereas $\scalebox{0.85}{$\left(l_{d_2}^{(k_2)}[2], \ldots, l_{d_2}^{(k_2)}[NI+1]\right)^T$}$ \Statex\hspace{\algorithmicindent}\hspace{\algorithmicindent}are updated based on \textbf{Proposition}~\ref{prop:solution_communication_power}
			\State \textbf{do in parallel for each} $d_2 \in \{1, \ldots, D_2\}$
			\State \hspace{\algorithmicindent} Evolve the associated population $S_{1,d_2}$ by $K_1$ \Statex \hspace{\algorithmicindent} \hspace{\algorithmicindent}\hspace{\algorithmicindent}generations using \textbf{Algorithm} \ref{alg:pso} to get  
			solution: \Statex \hspace{\algorithmicindent} \hspace{\algorithmicindent}\hspace{\algorithmicindent}$\mathbf{p}_{\mathrm{best}, d_2}^{(K_1)} \gets \texttt{PSO}(\mathbf{l}_{d_2}^{(k_2)})$
			\State Evaluate the fitness of each $\mathbf{l}_{d_2}^{(k_2)}$ given $\mathbf{p}_{\mathrm{best}, d_2}^{(K_1)}$ \Statex \hspace{\algorithmicindent}\hspace{\algorithmicindent}using (\ref{eq:fitness_s2})
			\State \label{line:best_global2}Update current local and global best particles in $S_2$ \Statex \hspace{\algorithmicindent}\hspace{\algorithmicindent}and their respective fitness
			\State Increment iteration number: $k_2 \gets k_2 + 1$
			\EndWhile
			\State \Return best particle of $S_2$ and its associated best \Statex \hspace{\algorithmicindent}particle from species $\{S_{1,d_2}\}_{d_2=1}^{d_2=D_2}$: $\{\mathbf{l}_{\mathrm{best}}^{(K_2)},\mathbf{p}_{\mathrm{best}, d_2}^{(K_1)}\}$.
			\EndProcedure
		\end{algorithmic}
	\end{algorithm}
	\fi
	\begin{figure}[]
		\centering
		\ifonecolumn
		\includegraphics[width=4.5in]{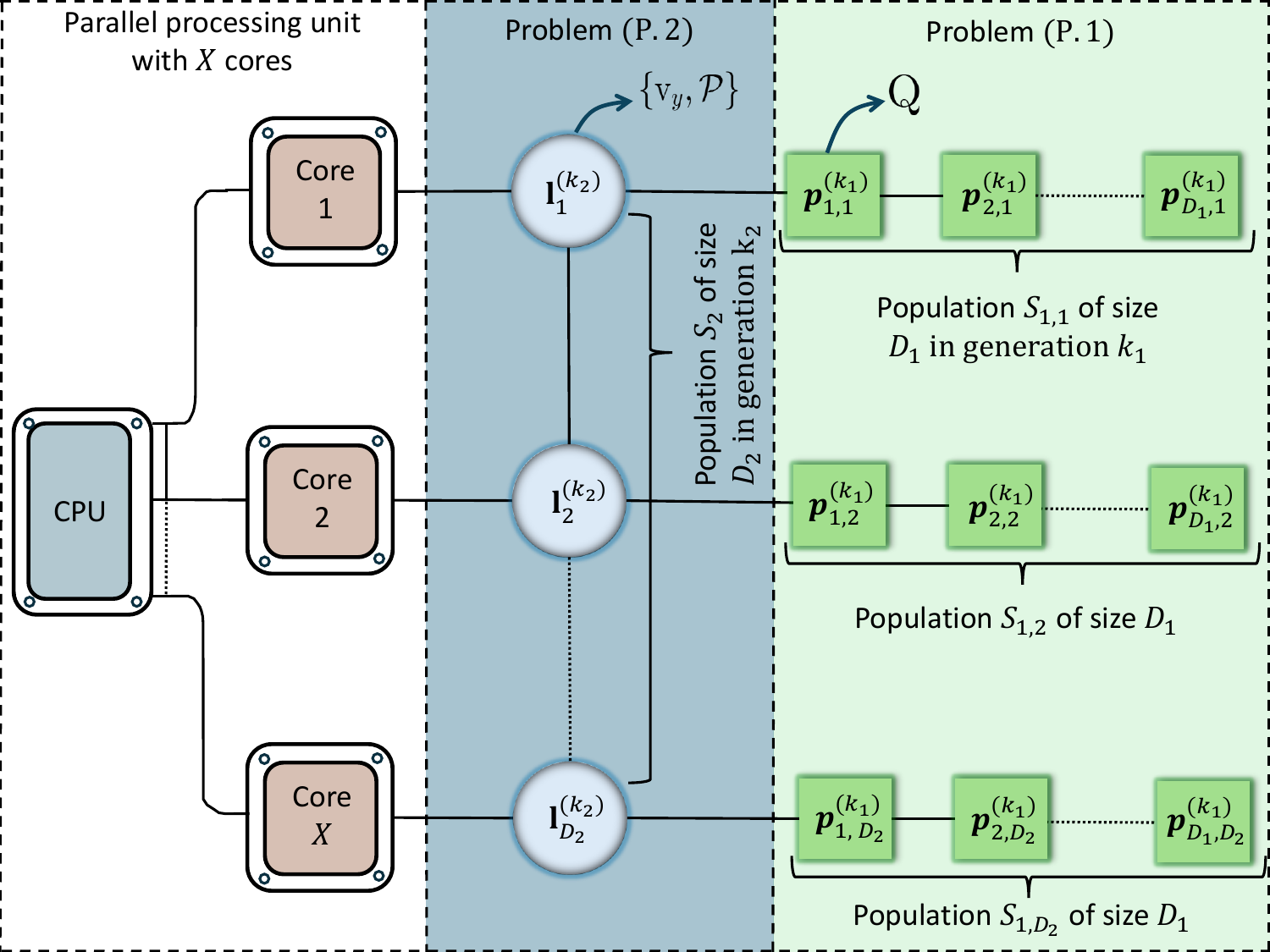}
		\else
		\includegraphics[width=0.9\columnwidth]{figures/diagram.pdf}
		\fi
		\caption{Block diagram of the proposed co-evolutionary algorithm with parallel implementation utilizing a total of $X$ cores. Optimal parallel efficiency is achieved when $X=D_2$, while smaller values of $X<D_2$ are also supported.}
		\label{fig:diagram}
	\end{figure}
	\Acp{ea} have shown great {promise for} solving constrained optimization problems \cite{evolutionary_algorithms2,evolutionary_algorithms}. Yet, \acp{ea} {still often} struggle {in solving} high-dimensional and non-separable problems \cite{coevolution_survey,coevolution_largescale2}. In recent years, co-evolution algorithms, an extension of \acp{ea}, have received increased attention due to their effectiveness in addressing  constrained \cite{coevolution_pso}, large-scale \cite{coevolution_largescale1,coevolution_largescale2}, dynamic \cite{coevolution_dynamic}, and multi-objective \cite{coevolution_multiobjective} optimization problems. In nature, co-evolution refers to the reciprocal evolutionary change between species that interact with each other \cite{coevolution_survey}, such as the co-evolution observed between plants and butterflies \cite{coevolution_butterfly}. In mathematical optimization,  co-evolution consists of decomposing a global problem into smaller sub-problems, each solved by distinct types of populations with different evaluation functions. {Here, a population consists of a set of particles, each representing a candidate solution to the problem, and the evaluation function, also known as the fitness function, is used to assess the quality or performance of candidate solutions based on the optimization goal.} These different types of populations evolve independently based on \ac{ea}-based optimizers while collaboratively contributing to the global solution {for a number of generations, i.e., iterations}. In general, {there is no standardized formulation for} co-evolutionary algorithms, instead, they are defined based on the characteristics of the target problem.\par
			\ifonecolumn
	\begin{algorithm}[t]
		\caption{Particle Swarm Optimization (PSO)}\label{alg:pso}
		\begin{algorithmic}[1]
			\Function{\texttt{PSO}}{$\mathbf{l}_{d_2}^{(k_2)}$}
			\Statex \textbf{Input:} A particle of population $S_2$ given by $\mathbf{l}_{d_2}^{(k_2)}=(\underbrace{l_{d_2}^{(k_2)}[1]}_{v_y}, \underbrace{l_{d_2}^{(k_2)}[2], \ldots,l_{d_2}^{(k_2)}[NI+1]}_{\mathcal{P}})^T$
			\Statex \textbf{Output:} Solution to Problem $\mathrm{(P.1)}$ for given $\{v_y,\mathcal{P}\}$
			\Statex \hspace{5mm} \textbf{Initialization:}
			\State  Initialize iteration number $k_1 \gets 1$
			\State  Initialize  random population $S_{1,d_2}$ of size $D_1$ and random PSO velocities
			\State  Evaluate the fitness of the initial population $S_{1,d_2}$ given $\mathbf{l}_{d_2}^{(k_2)}$ using (\ref{eq:fitness_function_s1})
			\State 	Set the local and global best particles and their respective fitness	
			\While{$k_1 \leq K_1$}
			\State Update PSO velocities using  (\ref{eq:pso_velocity_update_s1}) and (\ref{eq:pso_velocity_wall_s1})
			\State Update population $S_{1,d_2}$ using (\ref{eq:pso_update_s1})
			\State Evaluate the fitness of each particle in $S_{1,d_2}$ for given $\mathbf{l}_{d_2}^{(k_2)}$ using (\ref{eq:fitness_function_s1})
			
			\State Update and record best local particles $\mathbf{p}_{\mathrm{best}, d_1, d_2}^{(k_1)}, \forall d_1$ (and their fitness), the best global particle \Statex \hspace{5mm} \hspace{5mm} $\mathbf{p}_{\mathrm{best}, d_2}^{(k_1)}$ (and its fitness), and the worst fitness value $\sigma_h^{\mathrm{max}}$
			\State Set  iteration number $k_1 \gets k_1 + 1$
			\EndWhile
			\State \textbf{return} $\mathbf{p}_{\mathrm{best}, d_2}^{(K_1)}$
			\EndFunction
		\end{algorithmic}
	\end{algorithm}
	\else	\begin{algorithm}[t]
		\caption{Particle Swarm Optimization (PSO)}\label{alg:pso}
		\begin{algorithmic}[1]
			\Function{\texttt{PSO}}{$\mathbf{l}_{d_2}^{(k_2)}$}
			\Statex \textbf{Input:} A particle of population $S_2$ given by $\mathbf{l}_{d_2}^{(k_2)}=(\underbrace{l_{d_2}^{(k_2)}[1]}_{v_y}, \underbrace{l_{d_2}^{(k_2)}[2], \ldots,l_{d_2}^{(k_2)}[NI+1]}_{\mathcal{P}})^T$
			\Statex \textbf{Output:} Solution to Problem $\mathrm{(P.1)}$ for given $\{v_y,\mathcal{P}\}$
			\Statex\hspace{\algorithmicindent}\textbf{Initialization:}
			\State  Initialize iteration number $k_1 \gets 1$
			\State  Initialize  random population $S_{1,d_2}$ of size $D_1$ and \Statex\hspace{\algorithmicindent}random PSO velocities
			\State  Evaluate the fitness of the initial population $S_{1,d_2}$ given \Statex\hspace{\algorithmicindent}$\mathbf{l}_{d_2}^{(k_2)}$ using (\ref{eq:fitness_function_s1})
			\State 	Set the local and global best particles and their fitness	
			\While{$k_1 \leq K_1$}
			\State Update PSO velocities using  (\ref{eq:pso_velocity_update_s1}) and (\ref{eq:pso_velocity_wall_s1})
			\State Update population $S_{1,d_2}$ using (\ref{eq:pso_update_s1})
			\State Evaluate the fitness of each particle in $S_{1,d_2}$ for \Statex\hspace{\algorithmicindent}\hspace{\algorithmicindent}given $\mathbf{l}_{d_2}^{(k_2)}$ using (\ref{eq:fitness_function_s1})
			
			\State Update and record, including their fitness, the best \Statex\hspace{\algorithmicindent}\hspace{\algorithmicindent}local particles $\mathbf{p}_{\mathrm{best}, d_1, d_2}^{(k_1)},$ $\forall d_1,$ the best global \Statex\hspace{\algorithmicindent}\hspace{\algorithmicindent}particle   $\mathbf{p}_{\mathrm{best}, d_2}^{(k_1)}$, and the worst fitness value $\sigma_h^{\mathrm{max}}$
			\State Set  iteration number $k_1 \gets k_1 + 1$
			\EndWhile
			\State \textbf{return} $\mathbf{p}_{\mathrm{best}, d_2}^{(K_1)}$
			\EndFunction
		\end{algorithmic}
	\end{algorithm}
	\fi
	In this work, we propose to divide the original problem $\mathrm{(P)}$ into two sub-problems, denoted by $\mathrm{(P.1)}$ and $\mathrm{(P.2)}$. In sub-problem $\mathrm{(P.1)}$, we optimize the \ac{uav} formation in the $xz$-plane, $\mathcal{Q}$, for given velocity $v_y$ and communication transmit powers $\mathcal{P}$. Then, in sub-problem $\mathrm{(P.2)}$, we optimize the \ac{uav} velocity $v_y$ and communication transmit powers $\mathcal{P}$ for a given \ac{uav} formation $\mathcal{Q}$. To explore the search spaces of $\mathrm{(P.1)}$ and $\mathrm{(P.2)}$, we design two types of populations (i.e., two species) that interact within a co-evolutionary framework, guiding the global search towards a sub-optimal solution of $\mathrm{(P)}$. In particular, one species, denoted by $S_2$, evolves for a maximum of $K_2$ generations and contains $D_2$ particles, denoted by $\mathbf{l}_{d_2}^{(k_2)}\in \mathbb{R}^{1+NI}, k_2 \in \{1, \ldots, K_2\}, d_2 \in \{1, \ldots, D_2\}$, which explore the search space of sub-problem $\mathrm{(P.2)}$. The other species contains $D_2$ populations of the same type, denoted by $\{S_{1,d_2}\}_{d_2=1}^{d_2=D_2}$, that are associated with each particle of species $S_2$ and evolve for $K_1$ generations. Each population $S_{1,d_2}$ contains $D_1$ particles, denoted by $\mathbf{p}_{d_1,d_2}^{(k_1)}\in \mathbb{R}^{2I}, k_1 \in \{1, \ldots, K_1\}, d_1 \in \{1, \ldots, D_1\}$, that explore the search space of sub-problem  $\mathrm{(P.1)}$. As illustrated in Figure~\ref{fig:diagram}, population $S_2$ evolves over $K_2$ generations. In each generation, the populations $\{S_{1,d_2}\}_{d_2=1}^{d_2=D_2}$ evolve in parallel for $K_1$ generations. Specifically, for a given generation $k_2$ and candidate solution $\{v_y^*, \mathcal{P}^*\}$ represented by particle $\mathbf{l}_{d_2}^{(k_2)}$ from $S_2$, the associated population $S_{1,d_2}$ evolves to optimize \ac{uav} formation $\mathcal{Q}$ in the $xz$-plane. Once $S_{1,d_2}$ converges to a solution, population $S_2$ proceeds to its next generation. This nested evolutionary process is repeated until $S_2$ completes its $K_2$ generations. The proposed co-evolutionary procedure is inherently parallelizable and is summarized in \textbf{Algorithm}~\ref{alg:coevolution}.\par 
To effectively evolve each population type, a dedicated optimizer is required for each species to update its respective population across generations. Motivated by its simplicity and robustness, we adopt non-parameterized \ac{pso}~\cite{non_parameterized_pso} as the \ac{ea}-based optimizer for both species. \ac{pso} is a population-based metaheuristic algorithm that has demonstrated strong performance in solving highly coupled optimization problems with complex, non-convex search spaces. It {exhibits} rapid convergence and is inspired by collective behaviors observed in nature, such as coordinated foraging strategies of bird flocks \cite{pso}. Notably, \ac{pso} has been successfully applied to a variety of fields, including antenna array design~\cite{antenna_array}, \ac{uav}-based path planning~\cite{path_planning}, and scheduling problems~\cite{scheduling}. Additionally, it performs well in approximating global optima for challenging benchmark functions such as Griewank, Rastrigin, Rosenbrock, and Ackley~\cite{performance_pso}, which are known for their complex and highly coupled search spaces. 
In~\cite{coevolution_pso}, a parameterized co-evolutionary \ac{pso} framework was employed to handle constrained optimization problems by tuning penalty parameters. In contrast, we propose a non-parameterized co-evolutionary \ac{pso} framework that leverages co-evolution to decompose and solve sub-problems derived from the original optimization task. This approach reduces the likelihood of convergence to local optima and offers high parallelizability, enabling efficient implementation, as illustrated in Figure~\ref{fig:diagram}. In \textbf{Algorithm} \ref{alg:pso}, we summarize the steps of the {proposed} \ac{pso} algorithm, which is called in each iteration of the co-evolutionary procedure, summarized in \textbf{Algorithm} \ref{alg:coevolution}. In the next sections, we explain how the two species, $S_2$ and $\{S_{1,d_2}\}_{d_2=1}^{d_2=D_2}$, co-evolve using non-parameterized \ac{pso} to minimize the height error $\sigma_h$, subject to all constraints of problem $\mathrm{(P)}$.
	\subsection{Solution to Problem $\mathrm{(P.1)}$ via Populations $\{S_{1,d_2}\}_{d_2=1}^{d_2=D_2}$} 
	In this sub-section, we focus on optimizing the \ac{uav} formation in the $xz$-plane for a given swarm velocity and communication resource allocation. Therefore, problem $\mathrm{(P)}$ reduces to sub-problem $\mathrm{(P.1)}$ given by:
	\begin{alignat*}{2} 
		&\mathrm{(P.1)}:\min_{\mathcal{Q}} \hspace{3mm}  \sigma_{h}  & \qquad&  \\
		\text{s.t.} \hspace{3mm} &\mathrm{C1,C2,C5,C6,C7,C8}. &       &     
	\end{alignat*}
	Sub-problem $\mathrm{(P.1)}$ is non-convex due to the complexity of its objective function and constraints. In fact, the presence of non-linear trigonometric functions significantly increases the complexity of the problem. Additionally, a strong coupling exists among the optimization variables as changing the position of a single slave \ac{uav} affects the perpendicular baseline, the \ac{hoa}, and the height error of $I-1$ \ac{insar} pairs, (see (\ref{eq:perpendicular_baseline}), (\ref{eq:height_of_ambiguity}), and (\ref{eq:height_error})), respectively. Thus, the $2 I$ optimization variables of $\mathrm{(P.1)}$ are coupled, making it difficult to solve the problem based on classical optimization techniques. To address this challenge, we leverage {the} non-parameterized \ac{pso} algorithm \cite{coevolution_pso}, such that  populations $\{S_{1,d_2}\}_{d_2=1}^{d_2=D_2}$ are used to explore the search space of problem $\mathrm{(P.1)}$. In iteration $k_1$ of the \ac{pso} algorithm, population $S_{1,d_2}$ consists of $D_1$ particles of size $2I$, denoted by $\mathbf{p}_{d_1, d_2}^{(k_1)}=(p_{d_1, d_2}^{(k_1)}[1], \ldots,p_{d_1, d_2}^{(k_1)}[2I])^T\in \mathbb{R}^{2I}, d_1 \in \{1, \dots, D_1\}, k_1 \in \{1, \ldots, K_1\}$, and given by: \ifonecolumn
	\begin{equation}
		\mathbf{p}^{(k_1)}_{d_1,d_2}=(\overbrace{x^{(k_1)}_{1,{d_1,d_2}},z^{(k_1)}_{1,{d_1,d_2}}}^{U_1}, \dots,\overbrace{x^{(k_1)}_{I,{d_1,d_2}},z^{(k_1)}_{I,{d_1,d_2}}}^{U_I})^T, \forall d_1, \forall d_2, \forall k_1,
	\end{equation}\else
		\begin{gather}
		\mathbf{p}^{(k_1)}_{d_1,d_2}=(\overbrace{x^{(k_1)}_{1,{d_1,d_2}},z^{(k_1)}_{1,{d_1,d_2}}}^{U_1}, \dots,\overbrace{x^{(k_1)}_{I,{d_1,d_2}},z^{(k_1)}_{I,{d_1,d_2}}}^{U_I})^T, \\ \notag \forall d_1, \forall d_2, \forall k_1,
	\end{gather}
	\fi
	where $x^{(k_1)}_{i,d_1,d_2}$ and $z^{(k_1)}_{i,d_1,d_2}$ correspond to the candidate $x$- and  $z$- positions of $U_i, i \in \mathcal{I}$, respectively, as selected by particle $d_1$ of population $S_{1,d_2}$ in iteration $k_1$. Furthermore, each particle $\mathbf{p}^{(k_1)}_{d_1,d_2}$ of population  $S_{1,d_2}$ changes its position in iteration $k_1$ according to velocity vector $\mathbf{v}^{(k_1)}_{d_1,d_2} \in \mathbb{R}^{2I}$, given by: 
	\begin{equation}\label{eq:PSO_velocity}
		\mathbf{v}_{d_1,d_2}^{(k_1)}=(v^{(k_1)}_{d_1,d_2}[1], \dots, v^{(k_1)}_{d_1,d_2}[2I])^T, \forall d_1, \forall d_2, \forall k_1. 
	\end{equation}
	In iteration $k_1 \geq 2$, the velocity of each particle is affected by its previous velocity, given by $\mathbf{v}_{d_1, d_2}^{(k_1-1)}$, its previous local experience, given by its previous local best-known position, denoted by $\mathbf{p}_{\mathrm{best}, d_1,d_2}^{(k_1-1)}\in \mathbb{R}^{2I}$, and its global experience, given by the best-known position, denoted by $\mathbf{p}_{\mathrm{best},d_2}^{(k_1-1)}\in \mathbb{R}^{2I}$, as shown below \cite{pso}:
	\ifonecolumn
	\begin{equation}\label{eq:pso_velocity_update_s1}
		\mathbf{v}_{d_1, d_2}^{(k_1)}=  \stackrel{\text{Inertial term}}{\overbrace{w^{(k_1-1)} \mathbf{v}_{d_1, d_2}^{(k_1-1)}}} +\stackrel{\text{Cognitive term}}{\overbrace{ c_1 \mathcal{R}_1 (\mathbf{p}_{\mathrm{best}, d_1, d_2}^{(k_1-1)}-\mathbf{p}_{d_1, d_2}^{(k_1-1)} )}} + 	\stackrel{\text{Social term}}{\overbrace{ c_2  \mathcal{R}_2 (\mathbf{p}_{\mathrm{best}, d_2}^{(k_1-1)}-\mathbf{p}_{d_1, d_2}^{(k_1-1)})}}, \forall k_1 \geq2,
	\end{equation} 
	\else
	\begin{align}\label{eq:pso_velocity_update_s1}
		\mathbf{v}_{d_1, d_2}^{(k_1)}=&  \stackrel{\text{Inertial term}}{\overbrace{w^{(k_1-1)} \mathbf{v}_{d_1, d_2}^{(k_1-1)}}} +\stackrel{\text{Cognitive term}}{\overbrace{ c_1 \mathcal{R}_1 (\mathbf{p}_{\mathrm{best}, d_1, d_2}^{(k_1-1)}-\mathbf{p}_{d_1, d_2}^{(k_1-1)} )}} + \notag \\ 	&\stackrel{\text{Social term}}{\overbrace{ c_2  \mathcal{R}_2 (\mathbf{p}_{\mathrm{best}, d_2}^{(k_1-1)}-\mathbf{p}_{d_1, d_2}^{(k_1-1)})}}, \forall k_1 \geq2,
	\end{align} 
	\fi
	where $c_1$ and $c_2$  denote the cognitive and social learning {factors}, respectively, $\mathcal{R}_1$ and $\mathcal{R}_2$ are random variables that are uniformly distributed in $[0,1]$, and $w^{(k_1-1)} \in [0,1]$ is the inertial weight in iteration $k_1-1$. Different  methods to adjust the inertial weight {have been} proposed in the literature  \cite{performance_pso}. In this work, we employ a linearly decaying inertial weight. Moreover, {the} initial velocity vectors, denoted by $\mathbf{v}_{d_1, d_2}^{(1)}$, are uniformly distributed, such that $\mathbf{v}_{d_1, d_2}^{(1)} \sim \mathcal{U}\left(\mathbf{0}_{2I},v_{\rm PSO}^{\rm max}\mathbf{1}_{2I}\right), \forall d_1, \forall d_2$, where $v_{\rm PSO}^{\rm max}$ is the  maximum particle velocity. Note that the \ac{pso} velocity in (\ref{eq:pso_velocity_update_s1}) might {cause} particles {to move} outside of the feasible search space of $\mathrm{(P.1)}$. To prevent {this from happening} and {to} speed up the convergence of the \ac{pso} algorithm, we define boundaries for the search space based on constraints $\mathrm{C1}$ and $\mathrm{C2}$. Based on constraint $\mathrm{C2}$, we can show that: 
	\begin{equation}
		x_{\rm min} \leq x_i \leq x_{\rm max}, \forall i \in \mathcal{I},
	\end{equation}
	where $x_{\rm min}=x_t - z_{\rm max}\tan(\theta_{\rm max})$ and $x_{\rm max}=x_t - z_{\rm min}\tan(\theta_{\rm min})$. Consequently, we define the maximum and minimum boundaries for problem $\mathrm{(P.1)}$ using two particles denoted by $\mathbf{p}_{\rm max}$ and $\mathbf{p}_{\rm min}$, respectively, where $\mathbf{p}_{\rm max}=(p_{\rm max}[1],...,p_{\rm max}[2I])^T\in \mathbb{R}^{2I}$ is given by: 
	\begin{equation}
		p_{\rm max}[2i-1]=x_{\rm max}, p_{\rm max}[2i]=z_{\rm max}, \forall  i \in \mathcal{I}, 
	\end{equation}
	and  $\mathbf{p}_{\rm min}=(p_{\rm min}[1],...,p_{\rm min}[2I])^T\in \mathbb{R}^{2I}$ is defined as follows:
	\begin{equation}
		p_{\rm min}[2i-1]=x_{\rm min}, p_{\rm min}[2i]=z_{\rm min}, \forall  i \in \mathcal{I}.
	\end{equation}
	Based on these boundaries, we implement a reflecting wall mechanism as described in \cite{boundary}. In particular, this method prevents particles from leaving the search space by reflecting them at the defined boundary. To this end, each particle's velocity is adjusted as follows: 
	\ifonecolumn
	\begin{equation}\label{eq:pso_velocity_wall_s1}
		v_{d_1, d_2}^{(k_1)} [i] = - v_{d_1, d_2}^{(k_1)}[i], \text{ if } 
		p_{d_1, d_2}^{(k_1)}[i] + v_{d_1, d_2}^{(k_1)}[i] \notin [p_{\rm min}[i], p_{\rm max}[i]], \quad \forall d_1, \forall d_2, \forall i, \forall k_1.
	\end{equation}
	\else
	\begin{gather}\label{eq:pso_velocity_wall_s1}
	v_{d_1, d_2}^{(k_1)} [i] = - v_{d_1, d_2}^{(k_1)}[i], \text{ if } \\ \notag
	p_{d_1, d_2}^{(k_1)}[i] + v_{d_1, d_2}^{(k_1)}[i] \notin [p_{\rm min}[i], p_{\rm max}[i]],  \forall d_1, \forall d_2, \forall i, \forall k_1.
	\end{gather}
	\fi 
	Initially, the particles of the first generation (i.e., $k_1=1$) are uniformly distributed, such that $\mathbf{p}_{d_1, d_2}^{(1)} \sim \mathcal{U}(\mathbf{p}_{\rm min},\mathbf{p}_{\rm max}), \forall d_1, \forall d_2$. Next, in each iteration $k_1\geq 2$, each particle's position is updated as follows: 
	\begin{equation} \label{eq:pso_update_s1}
		\mathbf{p}_{d_1, d_2}^{(k_1+1)}=\mathbf{p}_{d_1, d_2}^{(k_1)}+\mathbf{v}_{d_1, d_2}^{(k_1)}, \forall d_1, \forall d_2, \forall k_1 \leq K_1-1.
	\end{equation}
	Although the boundary is defined based on constraints $\mathrm{C1}$ and $\mathrm{C2}$, it only enforces the feasibility  of constraint $\mathrm{C1}$. Consequently, constraint $\mathrm{C2}$, along with the remaining constraints of $\mathrm{(P.1)}$, must be incorporated into the fitness function to ensure feasibility. While the \ac{pso} algorithm was initially introduced for unconstrained optimization \cite{pso}, various adaptations have been developed to handle constraints, as seen in \cite{coevolution_pso,non_parameterized_pso,constrained_pso}. In this work, we employ a non-parameterized \ac{pso} approach, originally proposed in \cite{non_parameterized_pso}. Specifically, for given $v_y$ and $\mathcal{P}$, the fitness function of population $S_{1,d_2}$, denoted by $\mathcal{F}_1$, is defined as follows:
	\begin{equation}\label{eq:fitness_function_s1}	
		\mathcal{F}_1(\mathbf{p}_{d_1, d_2}^{(k_1)}|\mathbf{l}_{d_2}^{(k_2)})=\hspace{-1mm}\begin{dcases}
			\sigma_h, &\text{ if } \mathbf{p}_{d_1, d_2}^{(k_1)} \in F_1,\\ 
			\sigma_h^{\rm max}	+\sum\limits_{l \in \mathcal{L}_1}g_l, \hspace{-4mm}&\text{ if } \mathbf{p}_{d_1, d_2}^{(k_1)} \notin F_1,\\
		\end{dcases} \forall d_1, \forall d_2,
	\end{equation}
	where $\sigma_h$ is evaluated at the candidate solution $\mathbf{p}_{d_1, d_2}^{(k_1)}$ for fixed $v_y$ and $\mathcal{P}$, given by vector $\mathbf{l}_{d_2}^{(k_2)}$. $\sigma_h^{\rm max}$ is the worst feasible height error observed across all $d_1$, $d_2$, and  $k_1$,  $F_1$ is the feasible set for sub-problem $\mathrm{(P.1)}$ defined by constraints $\mathrm{C2}$ and $\mathrm{C5-C8}\footnote{Constraint $\mathrm{C1}$ is always satisfied based on the reflecting wall.}$, and functions $g_l, l \in \mathcal{L}_1=\{2,5,6,7,8\},$ are used as metrics to quantify violations of constraints  $\mathrm{C2}$ and $\mathrm{C5-C8}$, respectively. For {the} $v_y$ and $\mathcal{P}$ given by particle $\mathbf{l}_{d_2}^{(k_2)}$, functions $g_l$ are evaluated at the candidate solution $\mathbf{p}_{d_1, d_2}^{(k_1)}$ as follows:
	\begin{align}
		&g_2=\sum\limits_{i=1}^{I}\left(\left[ \left(\theta_{\rm min} - \theta_i\right)\right]^+ +\left[\left(\theta_i-\theta_{\rm max}\right)\right]^+\right), \label{eq:g2}\\
		&g_5= \sum\limits_{i<j}^{I}\left[ d_{\mathrm{min}}-b[i,j]\right]^+,\\
		& g_6=\left[C_{\rm min}-C_{\mathrm{tot}}\right]^+,\\
		&g_7= \sum\limits_{i<j}^{I}\left[h_{\rm amb}^{\rm min}- h_{\mathrm{amb}}[i,j]\right]^+,\\
		&g_8=  \sum\limits_{n=1}^{N}\sum\limits_{i=1}^{I}\left[  R_{\mathrm{min},i}-R_{i,n}\right]^+. \\
	\end{align} 
	\subsection{Solution to Problem $\mathrm{\rm (P.2)}$ via Population $S_2$}
	Next, we focus on optimizing the velocity of the \ac{uav} swarm along the $y$-axis and the communication transmit powers, denoted by  $v_y$ and $\mathcal{P}$, respectively, for a given \ac{uav} formation $\mathcal{Q}$. Thus, problem $\mathrm{(P)}$ reduces to sub-problem $\mathrm{(P.2)}$ given by:
	\begin{alignat*}{2} 
		&\mathrm{(P.2)}:\min_{v_y,\mathcal{P}} \hspace{3mm}  \sigma_{h}   & \qquad&  \\
		\text{s.t.} \hspace{3mm} &\mathrm{C3,C4,C8,C9}. &       &     
	\end{alignat*}
	Problem $\mathrm{(P.2)}$ is non-convex due to the objective function and constraint $\mathrm{C8}$. {To solve  $\mathrm{(P.2)}$}, we use a second type of population, denoted by $S_{2}$, which consists of $D_2$ particles of size $NI+1$, denoted by $\mathbf{l}_{d_2}^{(k_2)}=(l_{d_2}^{(k_2)}[1], \ldots,l_{d_2}^{(k_2)}[NI+1])^T\in \mathbb{R}^{NI+1}, d_2 \in \{1, \dots, D_2\}, k_2 \in \{1, \ldots, K_2\}$. Each particle encodes optimization variables $v_y$ and $\mathcal{P}$ as follows:
	\ifonecolumn
	\begin{equation}
		\mathbf{l}_{d_2}^{(k_2)}=\left(\underbrace{l_{d_2}^{(k_2)}[1]}_{v_y},\underbrace{l_{d_2}^{(k_2)}[2],\ldots,l_{d_2}^{(k_2)}[N+1]}_{\mathbf{P}_{\mathrm{com},1}},\ldots,\underbrace{l_{d_2}^{(k_2)}[(I-1)N+2],\ldots,l_{d_2}^{(k_2)}[NI+1]}_{\mathbf{P}_{\mathrm{com},I}}\right)^T, \forall d_2, \forall k_2.
	\end{equation}
	\else
		\begin{gather}
		\mathbf{l}_{d_2}^{(k_2)}=\Bigg(\underbrace{l_{d_2}^{(k_2)}[1]}_{v_y},\underbrace{l_{d_2}^{(k_2)}[2],\ldots,l_{d_2}^{(k_2)}[N+1]}_{\mathbf{P}_{\mathrm{com},1}},\ldots,\notag \\ \underbrace{l_{d_2}^{(k_2)}[(I-1)N+2],\ldots, l_{d_2}^{(k_2)}[NI+1]}_{\mathbf{P}_{\mathrm{com},I}}\Bigg)^T, \forall d_2, \forall k_2.
	\end{gather}
	\fi Similar to populations $\{S_{1,d_2}\}_{d_2=1}^{d_2=D_2}$, we employ a non-parameterized \ac{pso} optimizer to evolve population $S_2$. However, the search space of problem $\mathrm{(P.2)}$ can be significantly reduced by noting that $\sigma_h$ is independent of $\mathcal{P}$, therefore, solving $\mathrm{(P.2)}$ for a given (i.e., fixed) $v_y$ is equivalent to solving the following feasibility problem:
	\begin{alignat*}{2} 
		&\mathrm{(P.2)'}:\min_{\mathcal{P}} \hspace{3mm}  1   & \qquad&  \\
		\text{s.t.} \hspace{3mm} &\mathrm{C4,C8,C9}. &       &     
	\end{alignat*}
	Thus, the \ac{pso} particles optimize the swarm velocity $v_y$ while ensuring that constraint $\mathrm{C3}$ is satisfied (which acts as a boundary condition for population $S_2$), in addition to solving the feasibility problem $\mathrm{(P.2)'}$.
	\begin{proposition}\label{prop:solution_communication_power}
		For a given \ac{uav} velocity $v_y$ and \ac{uav} formation $\mathcal{Q}$, problem $\mathrm{(P.2)'}$ is feasible if and only if the following equivalent constraints (which are independent of $\mathcal{P}$) are feasible: 
		\begin{align}\label{eq:communication_power}
			\begin{dcases}
				\mathrm{C10:} \eta_{i,n} &\leq P_{\rm com}^{\rm max}, \forall n \in \mathcal{N}, \forall i\in \mathcal{I},\\
				\mathrm{C11:} \sum_{n=1}^{N} \eta_{i,n} &\leq \frac{E_{\rm max}}{\delta_t} - N P_{\rm prop}-N P_{\mathrm{rad},i}, \forall i\in \mathcal{I},
			\end{dcases}
		\end{align}
		where $\eta_{i,n}= \frac{d^2_{i,n}}{\beta_{c,i}}(2^{\frac{R\mathrm{min},i}{B_{c,i}}}-1)$. In case problem  $\mathrm{(P.2)'}$ is feasible, the communication power allocation {can be chosen as} $P^*_{\mathrm{com},i}[n]=\eta_{i,n}, \forall i, \forall n$. 
	\end{proposition}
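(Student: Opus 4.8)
The plan is to exploit the fact that, for each pair $(i,n)$, the throughput $R_{i,n}=B_{c,i}\log_2\bigl(1+\tfrac{P_{\mathrm{com},i}[n]\beta_{c,i}}{d_{i,n}^2}\bigr)$ is strictly increasing in the transmit power $P_{\mathrm{com},i}[n]$, so that constraint $\mathrm{C8}$ can be inverted into an explicit lower bound on the power. First I would solve $R_{i,n}\geq R_{\mathrm{min},i}$ for $P_{\mathrm{com},i}[n]$ by isolating the logarithm, exponentiating, and rearranging, which yields exactly $P_{\mathrm{com},i}[n]\geq\frac{d_{i,n}^2}{\beta_{c,i}}\bigl(2^{R_{\mathrm{min},i}/B_{c,i}}-1\bigr)=\eta_{i,n}$. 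Hence $\mathrm{C8}$ is equivalent to the per-slot lower bound $P_{\mathrm{com},i}[n]\geq\eta_{i,n}$, and since $R_{\mathrm{min},i}\geq 0$ forces $\eta_{i,n}\geq 0$, this bound automatically subsumes the nonnegativity part of $\mathrm{C4}$.

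Next I would recast the energy budget. Substituting $T=N\delta_t$ into $E_i\leq E_{\rm max}$ and dividing through by $\delta_t$ turns $\mathrm{C9}$ into $\sum_{n\in\mathcal{N}}P_{\mathrm{com},i}[n]\leq\frac{E_{\rm max}}{\delta_t}-NP_{\rm prop}-NP_{\mathrm{rad},i}$, whose right-hand side is exactly the bound in $\mathrm{C11}$. At this point the feasibility problem $\mathrm{(P.2)'}$ has been reduced, for each $i$, to finding powers obeying the box $\eta_{i,n}\leq P_{\mathrm{com},i}[n]\leq P_{\rm com}^{\rm max}$ together with a single summation budget per drone.

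The core of the argument is then a monotonicity observation: every remaining restriction is an \emph{upper} bound on the powers (the upper part of $\mathrm{C4}$ per slot, and the recast $\mathrm{C9}$ on the sum), so the choice that makes all of them simultaneously easiest to satisfy is the pointwise minimal admissible power $P^*_{\mathrm{com},i}[n]=\eta_{i,n}$. This yields both directions at once. For sufficiency, if $\mathrm{C10}$ and $\mathrm{C11}$ hold, then setting $P^*_{\mathrm{com},i}[n]=\eta_{i,n}$ meets the upper part of $\mathrm{C4}$ by $\mathrm{C10}$, meets $\mathrm{C8}$ with equality by construction, and meets the recast $\mathrm{C9}$ by $\mathrm{C11}$, so $\mathrm{(P.2)'}$ is feasible. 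For necessity, any feasible allocation must obey $\eta_{i,n}\leq P_{\mathrm{com},i}[n]\leq P_{\rm com}^{\rm max}$, which forces $\eta_{i,n}\leq P_{\rm com}^{\rm max}$ (i.e.\ $\mathrm{C10}$); summing the lower bound over $n$ and chaining with the budget gives $\sum_{n}\eta_{i,n}\leq\sum_{n}P_{\mathrm{com},i}[n]\leq\frac{E_{\rm max}}{\delta_t}-NP_{\rm prop}-NP_{\mathrm{rad},i}$ (i.e.\ $\mathrm{C11}$).

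I do not anticipate a genuine obstacle, since the result rests on elementary inversion of a strictly monotone rate function and a clean separation of lower- versus upper-bound constraints. The only points requiring care are the bookkeeping of the factor $\delta_t$ and the index convention ($n=0,\dots,N-1$ in the definition of $E_i$ versus $n\in\mathcal{N}$ elsewhere) when rewriting $\mathrm{C9}$, and verifying that the per-slot box and the coupled summation budget never interfere—which holds precisely because $\eta_{i,n}$ minimizes all upper-bound constraints at once, so no trade-off between slots is ever needed.
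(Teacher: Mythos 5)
Your proposal is correct and follows essentially the same route as the paper's proof: invert the monotone rate expression in $\mathrm{C8}$ to obtain the per-slot lower bound $P_{\mathrm{com},i}[n]\geq\eta_{i,n}$, observe that all remaining constraints are upper bounds, and conclude that the pointwise-minimal allocation $P^*_{\mathrm{com},i}[n]=\eta_{i,n}$ is feasible precisely when $\mathrm{C10}$ and $\mathrm{C11}$ hold. Your write-up is in fact slightly more explicit than the paper's on the necessity direction (summing the lower bounds and chaining with the energy budget), but the underlying argument is the same.
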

	\begin{proof}
		Constraints $\mathrm{C4}$, $\mathrm{C8}$, and $\mathrm{C9}$ can be reformulated as the following system of inequalities:
		\begin{align}\label{proof:system1}
			\begin{dcases}
				&P_{\mathrm{com},i}[n] \leq P^{\mathrm{max}}_{\mathrm{com}}, \forall n, \forall i,\\
				&P_{\mathrm{com},i}[n] \geq \eta_{i,n}, \forall n, \forall i,  \\
				&\sum\limits_{n=1}^{N} P_{\mathrm{com},i}[n] \leq \frac{E_{\rm max}}{\delta_t} - N P_{\rm prop} - N P_{\mathrm{rad},i}, \forall i,
			\end{dcases}
		\end{align}
		where $\eta_{i,n} = \frac{d^2_{i,n}}{\beta_{c,i}} \left(2^{\frac{R^{\mathrm{min,i}}}{B_{c,i}}} - 1\right) \geq 0, \forall i, \forall n$. The system of inequalities in (\ref{proof:system1}) describes a simple communication power allocation task for each \ac{uav} over time. The minimum and maximum instantaneous communication transmit powers of $U_i$ required in time slot $n$ are given by $\eta_{i,n}$ and $P^{\mathrm{max}}_{\mathrm{com}}$, respectively, and the total communication power consumed over time must not exceed the limit $\frac{E_{\rm max}}{\delta_t} - N P_{\rm prop} - N P_{\mathrm{rad},i}$ for each drone. A trivial solution to this resource allocation problem is to set the communication power in each time slot to its minimum required level, i.e., $P^*_{\mathrm{com},i}[n] = \eta_{i,n}, \forall i, \forall n$. Thus, the system of inequalities (\ref{proof:system1}), i.e., constraints $\mathrm{C4}$, $\mathrm{C8}$, and $\mathrm{C9}$, is feasible if and only if the system of inequalities (\ref{proof:system1}) is satisfied.
		In a nutshell,  rather than exploring the full $NI$-dimensional search space of problem $\mathrm{(P.2)'}$, it is sufficient to check whether values $\eta_{i,n}$ satisfy constraints $\mathrm{C10}$ and $\mathrm{C11}$ for all $i$ and $n$. If feasible, these values can be directly used as the solution to problem $\mathrm{(P.2)'}$, i.e., by setting $P^*_{\mathrm{com},i}[n] = \eta_{i,n}$ for all $i$ and $n$.
	\end{proof}
	Based on \textbf{Proposition}~\ref{prop:solution_communication_power}, for a given swarm velocity $v_y$, represented by $l_{d_2}^{(k_2)}[1]$, the communication transmit powers do not influence the objective function. Instead, they are determined by solving feasibility problem $\mathrm{(P.2)'}$. As a result, the velocity-based \ac{pso} update process is reduced to evolving a single variable, $v_y$, rather than $NI+1$ variables. The transmit powers, originally represented by $(l_{d_2}^{(k_2)}[2], \ldots, l_{d_2}^{(k_2)}[NI+1])^T$, are now set for a given $v_y=l_{d_2}^{(k_2)}[1]$ according to $P^*_{\mathrm{com},i}[n] = \eta_{i,n}$, and are used solely to compute penalty terms associated with constraints $\mathrm{C10}$ and $\mathrm{C11}$ in the fitness evaluation. The \ac{pso} velocity used to evolve $l_{d_2}^{(k_2)}[1]$ in population $S_2$ is given by:
	\ifonecolumn
	\begin{equation}\label{eq:pso_velocity_update_s2}
		v_{\mathrm{PSO}, d_2}^{(k_2)}= \overbrace{w^{(k_2-1)} v_{\mathrm{PSO}, d_2}^{(k_2-1)}}^{\text{Inertial weight}} + \overbrace{c_1 \mathcal{R}_1 (l_{\mathrm{best}, d_2}^{(k_2-1)}-l_{d_2}^{(k_2-1)}[1] )}^{\text{Cognitive term}} +
		\overbrace{c_2  \mathcal{R}_2 (l_{\mathrm{best}}^{(k_2-1)}-l_{d_2}^{(k_2-1)}[1])}^{\text{Social term}}, \forall k_2 \geq2, \forall d_2,
	\end{equation} 
	\else
	\begin{align}\label{eq:pso_velocity_update_s2}
		v_{\mathrm{PSO}, d_2}^{(k_2)}=& \overbrace{w^{(k_2-1)} v_{\mathrm{PSO}, d_2}^{(k_2-1)}}^{\text{Inertial weight}} + \overbrace{c_1 \mathcal{R}_1 (l_{\mathrm{best}, d_2}^{(k_2-1)}-l_{d_2}^{(k_2-1)}[1] )}^{\text{Cognitive term}} +\notag \\ &\overbrace{c_2  \mathcal{R}_2 (l_{\mathrm{best}}^{(k_2-1)}-l_{d_2}^{(k_2-1)}[1])}^{\text{Social term}}, \forall k_2 \geq2, \forall d_2,
	\end{align} 
	\fi
	where $l_{\mathrm{best}}^{(k_2-1)} \in \mathbb{R}$ is the previous global best-known velocity for the \ac{uav} swarm\footnote{The reader should not confuse the \ac{pso} velocity, denoted by $v_{\mathrm{PSO}, d_2}^{(k_1)}$, with the \ac{uav} swarm velocity $v_y$, which is represented by component $l_{d_2}^{(k_2)}[1]$ of the $d_2$-th particle in population $S_2$ at iteration $k_2$.} and $l_{\mathrm{best}, d_2}^{(k_2-1)} \in \mathbb{R}$ is the previous local best-known velocity of the \ac{uav} swarm for particle $d_2$.  Similar to (\ref{eq:pso_velocity_wall_s1}), constraint $\mathrm{C3}$ of problem $\mathrm{(P)}$ can be handled by implementing a reflecting wall as follows \cite{boundary}: 
	\ifonecolumn
	\begin{equation}\label{eq:pso_velocity_wall_s2}
		v_{\mathrm{PSO}, d_2}^{(k_2)}= - v_{\mathrm{PSO}, d_2}^{(k_2)}, \text{ if } 
		l_{d_2}^{(k_2)}[1] + v_{\mathrm{PSO}, d_2}^{(k_2)} \notin [v_{\rm min}, v_{\rm max}], \quad \forall d_2, \forall k_2 \geq 2.
	\end{equation}
	\else
	\begin{gather}
		\label{eq:pso_velocity_wall_s2}
		v_{\mathrm{PSO}, d_2}^{(k_2)}= - v_{\mathrm{PSO}, d_2}^{(k_2)}, \text{ if } \\ \notag
		l_{d_2}^{(k_2)}[1] + v_{\mathrm{PSO}, d_2}^{(k_2)} \notin [v_{\rm min}, v_{\rm max}], \quad \forall d_2, \forall k_2 \geq 2.
	\end{gather}
	\fi 
	In each iteration $k_2 \geq 2$, the component $l_{d_2}^{(k_2)}[1]$, representing $v_y$, is updated as follows:
	\begin{equation}\label{eq:update_population_s2}
		l_{d_2}^{(k_2)}[1] = l_{d_2}^{(k_2-1)}[1] + v_{\mathrm{PSO}, d_2}^{(k_2)}, \quad \forall d_2, \forall k_2 \geq 2.
	\end{equation}
	Initially, $l_{d_2}^{(1)}[1]$ is randomly selected from the interval $[v_{\mathrm{min}}, v_{\mathrm{max}}]$. Furthermore, in each iteration $k_2$, the remaining components of the particle, $\left(l_{d_2}^{(k_2)}[2], \ldots, l_{d_2}^{(k_2)}[NI+1]\right)^T$, representing the communication transmit power allocation, are updated based on \textbf{Proposition}~\ref{prop:solution_communication_power}, such that $P_{\mathrm{com},i}[n] = \eta_{i,n}, \forall i, \forall n$, given $v_y = l_{d_2}^{(k_2)}[1]$.
	
	The remaining constraints of problem $\mathrm{(P.2)}$ (i.e., $\mathrm{C4}$, $\mathrm{C8}$, and $\mathrm{C9}$), expressed equivalently by $\mathrm{C10}$ and $\mathrm{C11}$ according to \textbf{Proposition}~\ref{prop:solution_communication_power}, are incorporated into the fitness function as follows \cite{non_parameterized_pso}:
	\ifonecolumn
	\begin{equation}\label{eq:fitness_s2}
		\mathcal{F}_2(\mathbf{l}_{d_2}^{(k_2)}| \mathbf{p}_{\mathrm{best},d_2}^{(K_1)})=\begin{dcases}
			\mathcal{F}_1(\mathbf{p}_{\mathrm{best},d_2}^{(K_1)}),\hspace{5mm}&\text{if } \mathbf{l}_{d_2}^{(k_2)} \in F_2,\\
			\sigma_h^{\mathrm{max}} + \sum\limits_{l_2 \in \mathcal{L}_2} g_{l_2},\hspace{5mm}&\text{if } \mathbf{l}_{d_2}^{(k_2)} \notin F_2,\\
		\end{dcases} \hspace{5mm}\forall d_2,
	\end{equation}
	\else 
		\begin{equation}\label{eq:fitness_s2}
		\mathcal{F}_2(\mathbf{l}_{d_2}^{(k_2)}| \mathbf{p}_{\mathrm{best},d_2}^{(K_1)})=\begin{dcases}
			\mathcal{F}_1(\mathbf{p}_{\mathrm{best},d_2}^{(K_1)}),&\text{if } \mathbf{l}_{d_2}^{(k_2)} \in F_2,\\
			\sigma_h^{\mathrm{max}} + \sum\limits_{l_2 \in \mathcal{L}_2} g_{l_2},&\text{if } \mathbf{l}_{d_2}^{(k_2)} \notin F_2,\\
		\end{dcases} \forall d_2,
	\end{equation}
	\fi
	where the set $F_2$ defines the feasible region of sub-problem $\mathrm{(P.2)}$ and functions $g_{l_2}$,  $l_2 \in \mathcal{L}_2 = \{10, 11\}$, serve as penalty terms quantifying the degree of constraint violation:
	\begin{align}
		&g_{10}=\sum\limits_{n=1}^{N}\sum\limits_{i=1}^{I}\left[\eta_{i,n} - P_{\rm com}^{\rm max}\right]^+, \label{eq:g10}\\
		&g_{11}=\sum_{i=1}^{I}\sum_{n=1}^{N}\left[ \eta_{i,n}-\frac{E_{\rm max}}{\delta_t} + N P_{\rm prop}+N P_{\mathrm{rad},i}\right]^+.
	\end{align} 
	\subsection{Complexity and Convergence Analysis}
	 The computational complexity of the proposed solution outlined in \textbf{Algorithm}~\ref{alg:coevolution} depends on the  complexity of the underlying \textbf{Algorithm}~\ref{alg:pso}. This is because, each particle in population $S_2$ invokes a full run of \textbf{Algorithm}~\ref{alg:pso}. Therefore, to analyze the computational complexity of the proposed solution in  \textbf{Algorithm}~\ref{alg:coevolution}, we begin by evaluating the complexity of \textbf{Algorithm}~\ref{alg:pso}, {denoted by $C_2$}. \textbf{Algorithm} \ref{alg:pso} operates on a population of size $D_1$, with each particle having a dimensionality $2I$, {and its complexity is mainly related to evaluating each particle and updating the population. Let $C_{\mathcal{F}_1}$ denote the computational cost of evaluating fitness function $\mathcal{F}_1$. Then, the total complexity of \textbf{Algorithm}~\ref{alg:pso} is given by $C_2=\mathcal{O}\left(\underbrace{K_1 D_1 C_{\mathcal{F}_1}}_{\text{Fitness evaluation}} + \underbrace{K_1 D_1 I}_{\text{Updating population}}\right)$.} The computational cost {of} evaluating the fitness function $C_{\mathcal{F}_1}$ arises from two sources:  
	(i) \textit{formation-only-dependent constraints}, i.e., constraints enforced in the across-track plane (e.g., $\mathrm{C2}$, and $\mathrm{C5}-\mathrm{C7}$), which scale as $\mathcal{O}(I)$ and 
	(ii) \textit{formation-and-time-dependent constraints}, i.e., constraints enforced across both the across- and along-track planes (e.g., constraint $\mathrm{C8}$), which scale as $\mathcal{O}(N I)$. Therefore, the cost of evaluating the fitness function is $C_{\mathcal{F}_1} = \mathcal{O}(D_1 N I)$, and the overall computational complexity of \textbf{Algorithm}~\ref{alg:pso} becomes $C_2=\mathcal{O}(K_1 D_1 N I)$, which is polynomial in the problem parameters. Let $C_{\mathcal{F}_2}$ denote the complexity of evaluating the fitness function $\mathcal{F}_2$, then, the total complexity { of \textbf{Algorithm}~\ref{alg:coevolution} is given by $\mathcal{O}\left(\underbrace{K_2 D_2 C_{\mathcal{F}_2} C_2}_{\text{Fitness evaluation}}+ \underbrace{K_2 D_2(NI + 1)}_{\text{Updating population}}\right)$}. Given that the fitness function $\mathcal{F}_2$ involves evaluating a set of constraints that scale with $NI$, we have $C_{\mathcal{F}_2} = \mathcal{O}(NI)$. Therefore, the overall \textit{worst-case serial complexity} {of \textbf{Algorithm}~\ref{alg:coevolution} becomes $\mathcal{O}(K_2 K_1 D_2 D_1 N I)$.} Additionally, the outer loop is parallelizable across $D_2$ and assuming an ideal parallelization with $X = D_2$ cores (see Figure~\ref{fig:diagram}), the \textit{parallel complexity} reduces to $\mathcal{O}(K_2 K_1 D_1 N I)$ which is polynomial in all relevant problem dimensions.\footnote{Further computational speed-up can be achieved through vectorized operations across $N$, $I$, and $D_1$, using tensor-based implementations.}\par 
	The best global value in \textbf{Algorithm}~\ref{alg:coevolution} is guaranteed to asymptotically converge as the number of iterations $K_2 \to \infty$. This convergence is ensured {by} Lines~\ref{line:best_global1} and \ref{line:best_global2}, which continuously track and retain the best solution, thereby, enforcing a non-increasing sequence of global best values. Moreover, it holds that $
	\mathcal{F}_2(\mathbf{l}_{d_2}^{(k_2)} \mid \mathbf{p}_{\mathrm{best},d_2}^{K_1}) \geq 0, \quad \forall \mathbf{l}_{d_2}^{(k_2)}, \ \forall \mathbf{p}_{\mathrm{best},d_2}^{K_1},$ which theoretically proves the asymptotic convergence.
	However, it must be acknowledged that, in theory, convergence alone does not imply optimality nor feasibility. Specifically, in highly constrained or tightly bounded search spaces, the algorithm may converge to sub-optimal or even infeasible solutions. This limitation is not unique to the proposed method but {is} rather a well-known characteristic of most stochastic optimization algorithms designed for non-linear constrained optimization \cite{evolutionary_algorithms,evolutionary_algorithms2}. Nevertheless, in practice, such methods have proven effective in locating not only feasible solutions but often globally optimal ones, as demonstrated on high-dimensional, complex, and strongly coupled benchmark functions~\cite{performance_pso}. 
	In particular, the proposed co-evolutionary algorithm benefits from enhanced exploration capabilities by decomposing the search space and exploring sub-regions of the search space in parallel in each generation. This structural advantage significantly improves {its} ability to escape local minima and to identify high-quality, feasible solutions, as supported by the extensive numerical simulations presented in the next section.
	\section{Simulation Results}\label{sec:simulation_results}
	\begin{table}[]
		\centering
		\caption{System parameters \cite{victor2,snr_equation,coherence1}.}
		\label{tab:my-table}
		\begin{adjustbox}{max width=\columnwidth}
			\begin{tabular}{|c|c?c|c?c|c|}
\hline
Parameter           & Value 					& Parameter & Value 						& Parameter &Value \\ \hline
$N$               &$200$ 					&$h_{\rm amb}^{\rm min}$   &1.2 m  					&$f_0$      &2.5 GHz \\ \hline		
$\delta_t$ 		  &1 s    				&$\gamma_{\rm other}[i,j], \forall i<j$   &0.6			&$\theta_{3\mathrm{dB}}$  &40°    \\\hline
$z_{\mathrm{min}}$&  1 m                  			&$B_{c,i}, \forall i$    & 1 GHz				&$\theta_{\rm min}$  & 37° \\ \hline
$z_{\mathrm{max}}$&  100 m   			  		&$\beta_{c,i}, \forall i$    &20 dB						&$\theta_{\rm max}$  &48.7°\\ \hline
$v_{\mathrm{min}}$& 1 m/s    					&$\sigma_0$           &-10 dB    				&$n_L$   &4  \\ \hline
$v_{\rm max}$ 	  &12 m/s     					&$P_{\mathrm{rad},1}$   &	$15$ dBm		&$n_B$  &4\\ \hline
$x_t$       &20 m 						&$G_t$ &	5 dBi	   					&$U_{\mathrm{tip}}$     &120 m/s   \\ \hline
$g_x $        & 70 m   						&$G_r$ & 5 dBi 	 						&$W_u$  &120   	  \\ \hline
$g_y$   	&150 m      	 				&$\lambda$  &0.12 m    						&$\rho$  & 1.225 kg/m$^3$	\\ \hline
$g_z$  	& 25 m							&$\tau_p\times\mathrm{PRF}$     &10$^{-4}$			&$s$   & 500	   \\ \hline
$d_{\mathrm{min}}$&4 m 				 		&$T_{\mathrm{sys}}$			&400 K			&$A_e$ & 128   \\ \hline
$C_{\rm  min}$   & 4.5$\times10^{4}$ m$^2$			&$B_{\mathrm{Rg}}$  &3 GHz				   	&$\delta_u$ & 0.012\\ \hline
$P_{\mathrm{com}}^{\mathrm{max}}$&39 dBm				&$F$ & 5 dB 							&$\Omega$ & 300 rad/s\\ \hline
$E_{\mathrm{max}}$& 83.33 Wh					&$L$    &  6 dB 						&$R$ &0.4 \\ \hline
			\end{tabular}
		\end{adjustbox} \vspace{-4mm}
	\end{table}
	This section presents simulation results for the co-evolutionary \ac{pso} algorithm outlined in \textbf{Algorithm}~\ref{alg:coevolution}. Unless otherwise specified, the parameters used are as listed in Table~\ref{tab:my-table}. The simulations were performed on a high-performance cluster using two AMD EPYC 7502 processors, each featuring 64 cores. This configuration enables full parallelization of \textbf{Algorithm} \ref{alg:coevolution}, utilizing a total of $X = D_2 = 128$ cores. The algorithm was run for $D_1$=500, a maximum of $K_1 = 500$ generations for \textbf{Algorithm}~\ref{alg:pso} and $K_2 = 100$ generations for the co-evolutionary algorithm. The learning rates were set to $c_1 = 2.5$ and $c_2 = 2$, and the maximum \ac{pso} velocity was constrained to $v_{\mathrm{PSO}}^{\mathrm{max}} = 1$.
	\subsection{Benchmark Schemes}
	To evaluate performance, we adopt the {following} state-of-the-art algorithms as benchmark schemes: 
	\subsubsection{\ac{cga} \cite{cga}}  \Acp{ga} are population-based metaheuristics inspired by natural selection. While traditional \acp{ga} use binary encoding, \ac{cga}s are better suited for real-valued optimization, often yielding improved performance in continuous domains \cite{cga}. This benchmark scheme uses \ac{cga} to solve problem~$\mathrm{(P)}$, incorporating natural selection with a selection rate of 0.3, Gaussian mutation with a mutation rate of 0.1, blend crossover (\(\alpha_b\)-BLX) with \(\alpha_b = 0.1\), and a maximum of 300 generations. To handle the constraints associated with problem~$\mathrm{(P)}$, we adopt a non-parametric penalty function similar to that used in~(\ref{eq:fitness_s2}). The penalty factors are computed based on the communication power allocation derived from \textbf{Proposition}~\ref{prop:solution_communication_power}, ensuring constraint violations are appropriately penalized during fitness evaluation. For brevity, we omit the implementation details of the \ac{cga} and refer interested readers to~\cite{cga} for a comprehensive description.
\subsubsection{\ac{genocop} \cite{genocopII}}\Ac{genocop} is a hybrid \ac{ea} for solving constrained optimization problems, based on the original Genocop I algorithm \cite{genocopI}. While Genocop I handles only linear constraints using convex set theory and genetic search, \ac{genocop} {also} supports nonlinear and non-convex constraints with specialized evolutionary operators. Unlike other methods, \ac{genocop} leverages explicit constraint structure knowledge and incorporates custom mutation and crossover operators to maintain feasibility. Though more computationally intensive than traditional \acp{ga}, these operators enhance performance in tightly constrained, high-dimensional problems.  Our implementation uses an initial temperature of \(10^5\), uniform mutation rate of 0.2, non-uniform mutation rate of 0.7, boundary mutation rate of 0.7, a population size of 100, and a maximum of 50 generations for Genocop II and 300 for Genocop I. For a detailed description of the implementation of Genocop I and II, readers are referred to \cite{genocopI} and \cite{genocopII}, respectively.
\subsubsection{Simulated Annealing (SA) \cite{sa}}  
\ac{sa} is a stochastic global optimization algorithm designed to escape local optima by incorporating randomness into the search process. This feature makes it particularly effective for optimizing nonlinear objective functions where traditional local search algorithms may fail \cite{sa}. The algorithm probabilistically accepts worse solutions, with the likelihood of acceptance governed by a temperature parameter. As the search progresses, the temperature decreases, reducing the probability of accepting inferior solutions and allowing the algorithm to refine the solution toward the global optimum. In this benchmark, we employ a fast simulated annealing schedule with an initial temperature of \( T_0 = 10 \). The temperature {in} each iteration $k_3 \in \{1, \ldots, K_3\}$ is updated according to the following equation:  
\[
T_{k_3} = \frac{T_0}{k_3 + 1},
\]
where $K_3=5\times 10^{3}$. For further implementation details, interested readers are referred to \cite{sa}.
\subsubsection{Deep Reinforcement Learning (DRL) \cite{ddpg}}
 In this benchmark, we implement the \ac{ddpg} algorithm, a model-free, off-policy, actor-critic method for continuous control, inspired by Deep Q-Networks (DQN) \cite{ddpg}. To solve problem \(\mathrm{(P)}\), we model it as a Markov Decision Process (MDP). At each time step \(t\), {state \(\mathbf{s}_t\) and action \(\mathbf{a}_t\) are defined as \(\mathbf{s}_t = \left(x_1[t], z_1[t], \dots, x_I[t], z_I[t], v_y[t]\right)^T\) and \(\mathbf{a}_t = \left(\Delta x_1[t], \Delta z_1[t], \dots, \Delta x_I[t], \Delta z_I[t], \Delta v_y[t]\right)^T\), respectively,} where each \(\Delta\) term represents the change from the previous time step (e.g., \(\Delta v_y[t] = v_y[t] - v_y[t-1]\)). The communication power allocation at each time step is derived from \textbf{Proposition} \ref{prop:solution_communication_power}, and the reward is defined as:  
\[
r_t = \sigma_h - \sum_{l \in \mathcal{L}_1 \cup \mathcal{L}_2} g_l,
\]
where \(\sigma_h\) and \(g_l\) are calculated based on state \(\mathbf{s}_t\) and {the} communication powers. The most important hyperparameters used in our simulations are: Actor learning rate \(10^{-3}\), critic learning rate \(10^{-4}\), and target network soft update rate of 0.3. For full algorithmic details, we refer readers to the {original \ac{ddpg}} paper \cite{ddpg}.
\subsection{Convergence Analysis}
\begin{figure}
	\centering
	\ifonecolumn
		\includegraphics[width=4in]{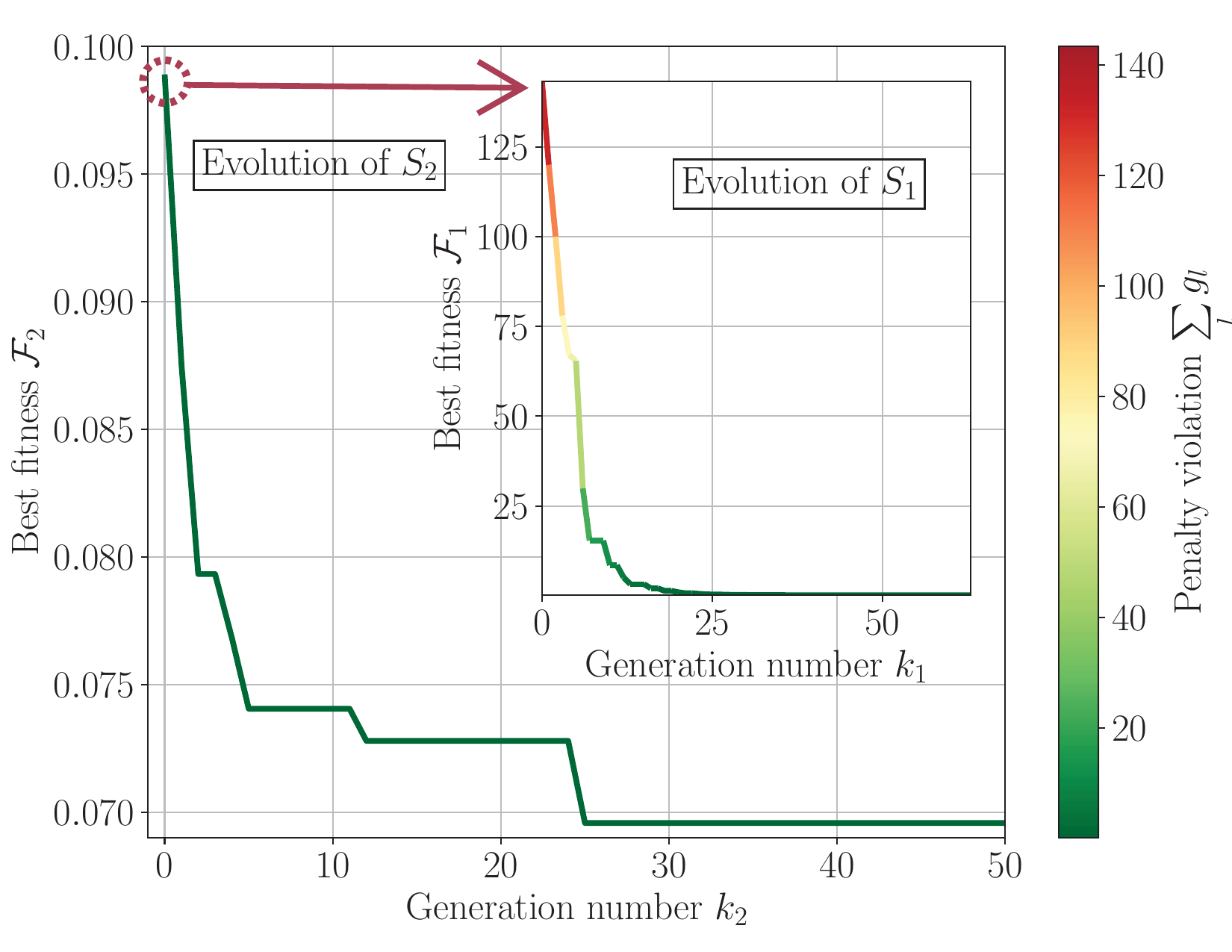}
	\else
		\includegraphics[width=0.9\columnwidth]{figures/proposed_convergence.pdf}
		\fi
\caption{Convergence of the proposed co-evolutionary PSO algorithm. The colormap indicates constraint feasibility for problem $\mathrm{(P)}$ where dark green color refers to feasible solutions. The number of generations shown is limited for clearer visualization.}
\label{fig:proposed_convergence}
	\end{figure} 

\begin{figure*}[t]
	\centering
	\begin{subfigure}[t]{0.24\textwidth}
		\includegraphics[width=\linewidth]{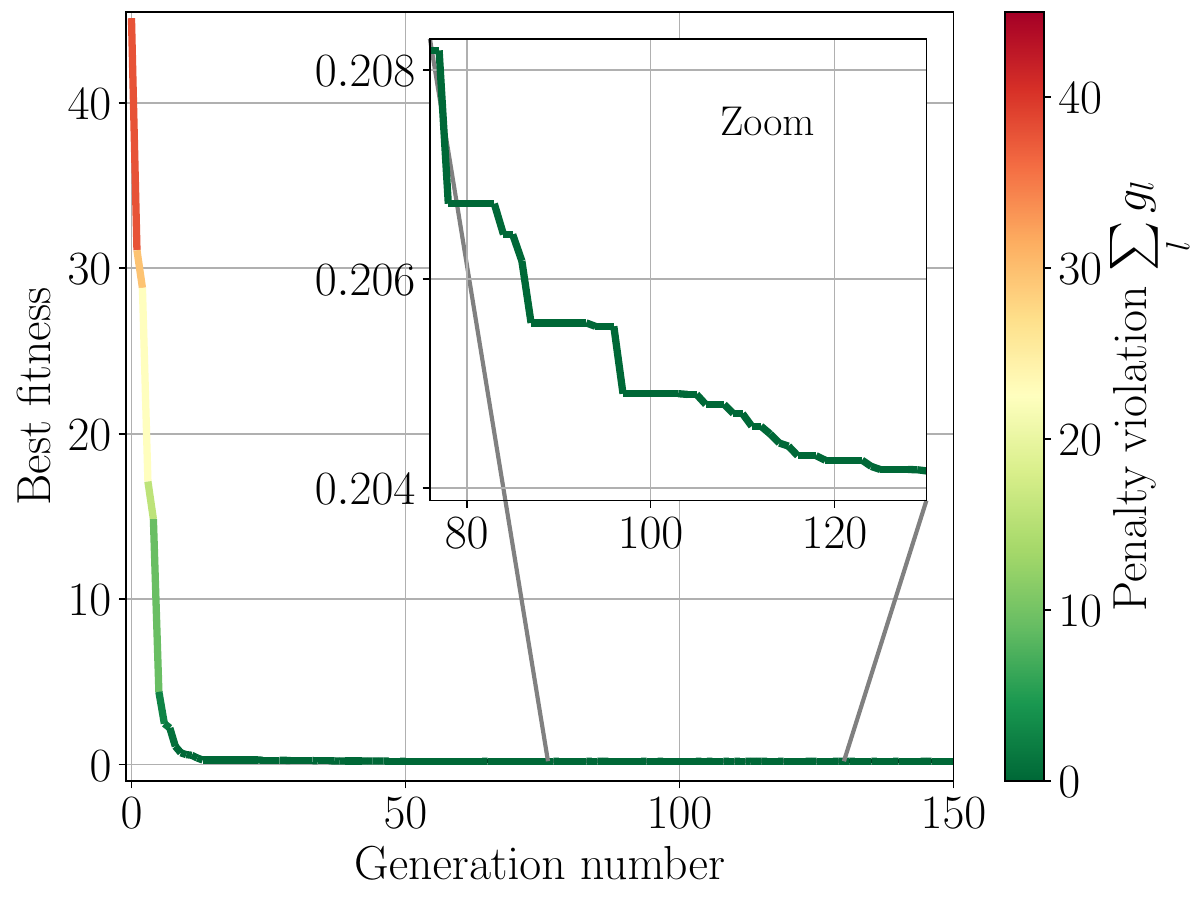}
		\caption{}
	\end{subfigure}
	\hfill
	\begin{subfigure}[t]{0.24\textwidth}
		\includegraphics[width=\linewidth]{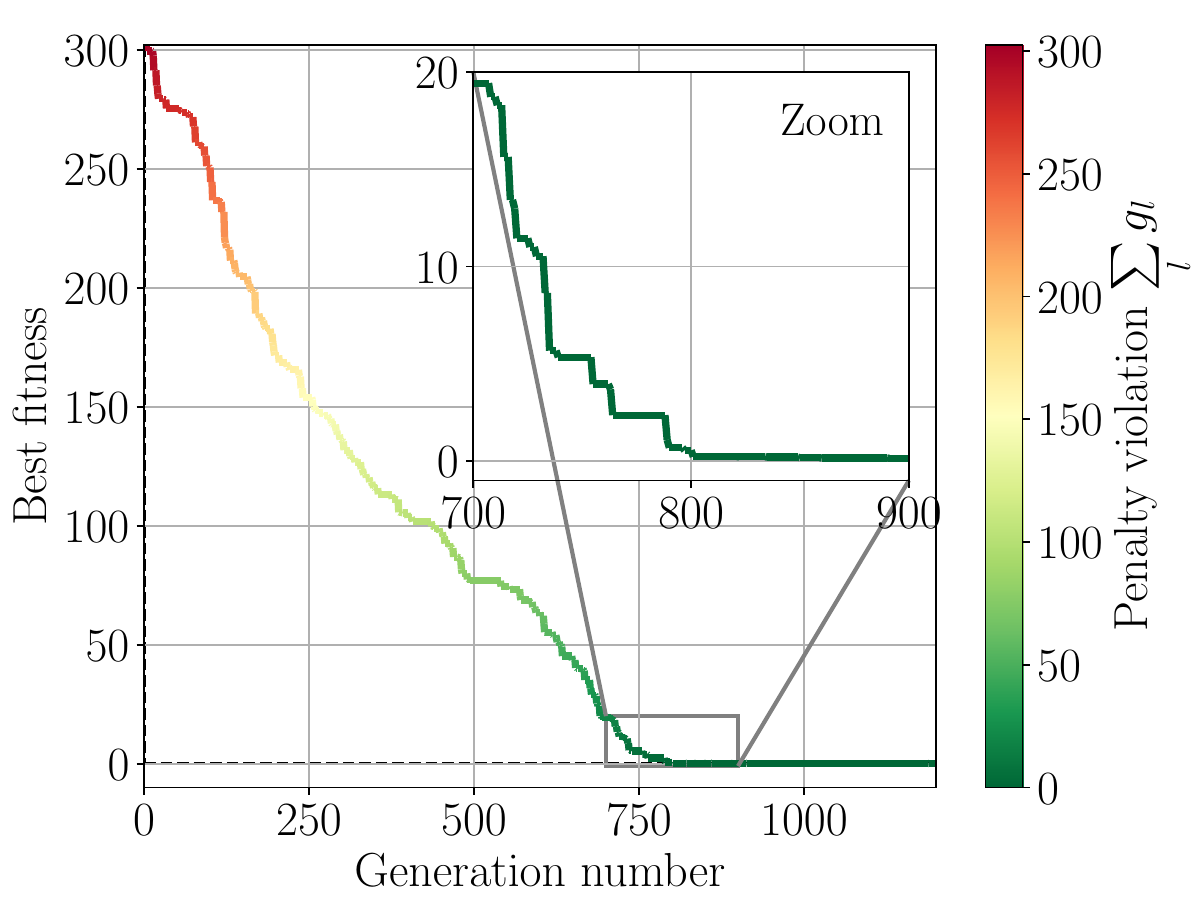}
		\caption{}
	\end{subfigure}
	\hfill
	\begin{subfigure}[t]{0.24\textwidth}
		\includegraphics[width=\linewidth]{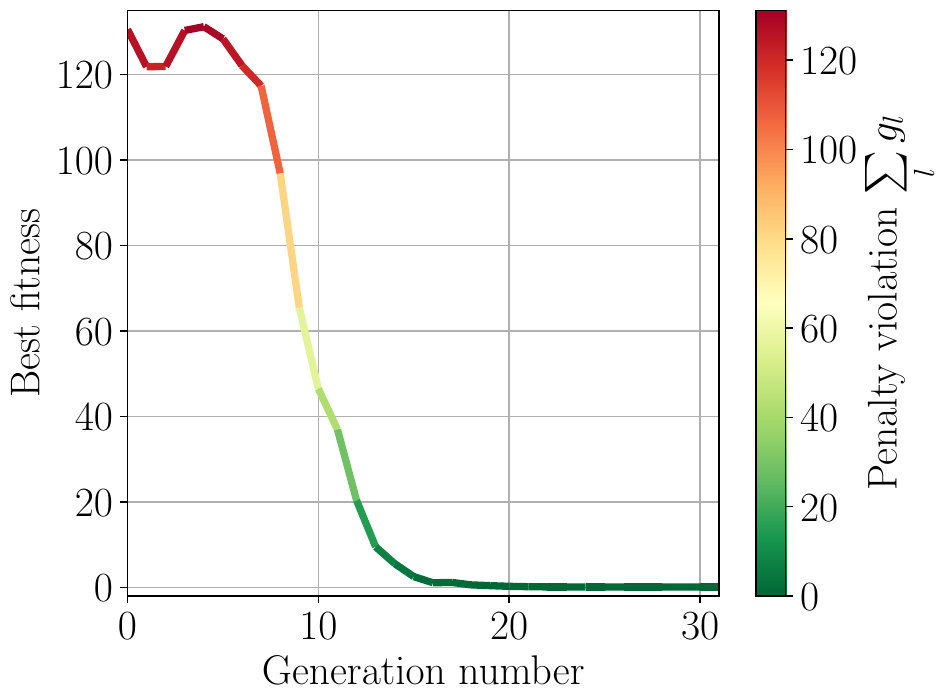}
		\caption{}
	\end{subfigure}
	\hfill
	\begin{subfigure}[t]{0.24\textwidth}
		\includegraphics[width=\linewidth]{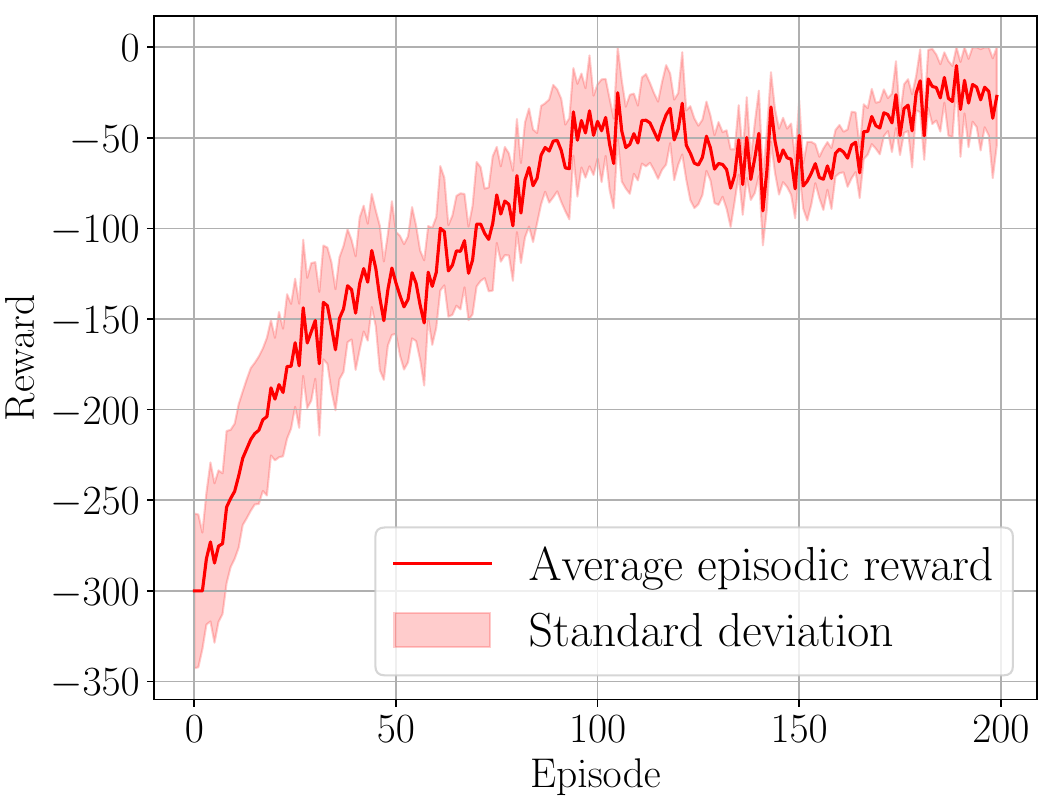}
		\caption{}
	\end{subfigure}
	\caption{Convergence of the different benchmark schemes: (a) CGA, (b) SA, (c) Genocop II, and (d) DRL}
	\label{fig:convergence_benchmark_schemes}
\end{figure*}
In Figures \ref{fig:proposed_convergence} and \ref{fig:convergence_benchmark_schemes}, we examine the convergence behavior of the proposed co-evolutionary algorithm and the different benchmark schemes. The fitness function is plotted against the generation number, with a colormap used to track solution feasibility. Dark green indicates feasible solutions, while red highlights areas with significant constraint violations. Notably, when a solution is feasible, the fitness function aligns with the height error $\sigma_h$, as defined in (\ref{eq:fitness_function_s1}) and (\ref{eq:fitness_s2}). Figure \ref{fig:proposed_convergence} illustrates the improvement in the fitness value as population $S_2$ evolves, based on \textbf{Algorithm} \ref{alg:coevolution}. A zoomed-in plot further shows the evolution of population $S_{1,d_1}$ via \ac{pso} \textbf{Algorithm} \ref{alg:pso}. Initially, the algorithm explores a larger solution space, with constraints not fully satisfied, before converging to a feasible solution after 25 generations. Thanks to the co-evolutionary framework, \textbf{Algorithm} \ref{alg:coevolution} identifies high-quality solutions even early on by evolving in parallel populations $\{S_{1,d_2}\}_{d_1=1}^{d_2=D_2}$ {for} $K_1$ generations. For the {considered} system configuration, with a network size of $I = 5$, the co-evolutionary algorithm achieves a height accuracy of up to 7 cm. This performance is achieved with an optimized UAV swarm speed of 4.5 m/s and average drone altitudes of 44 m. In Figure \ref{fig:convergence_benchmark_schemes}, we analyze the convergence of the \ac{cga}, \ac{genocop}, \ac{sa}, and \ac{drl} benchmark schemes. The plots show that the \ac{cga}, \ac{genocop} and \ac{sa} benchmarks converge to feasible solutions; however, their performance varies, as will be discussed in the next section.\par
\subsection{Relevant Tradeoffs}
\begin{figure}[H]
	\centering
	\ifonecolumn
	\includegraphics[width=4in]{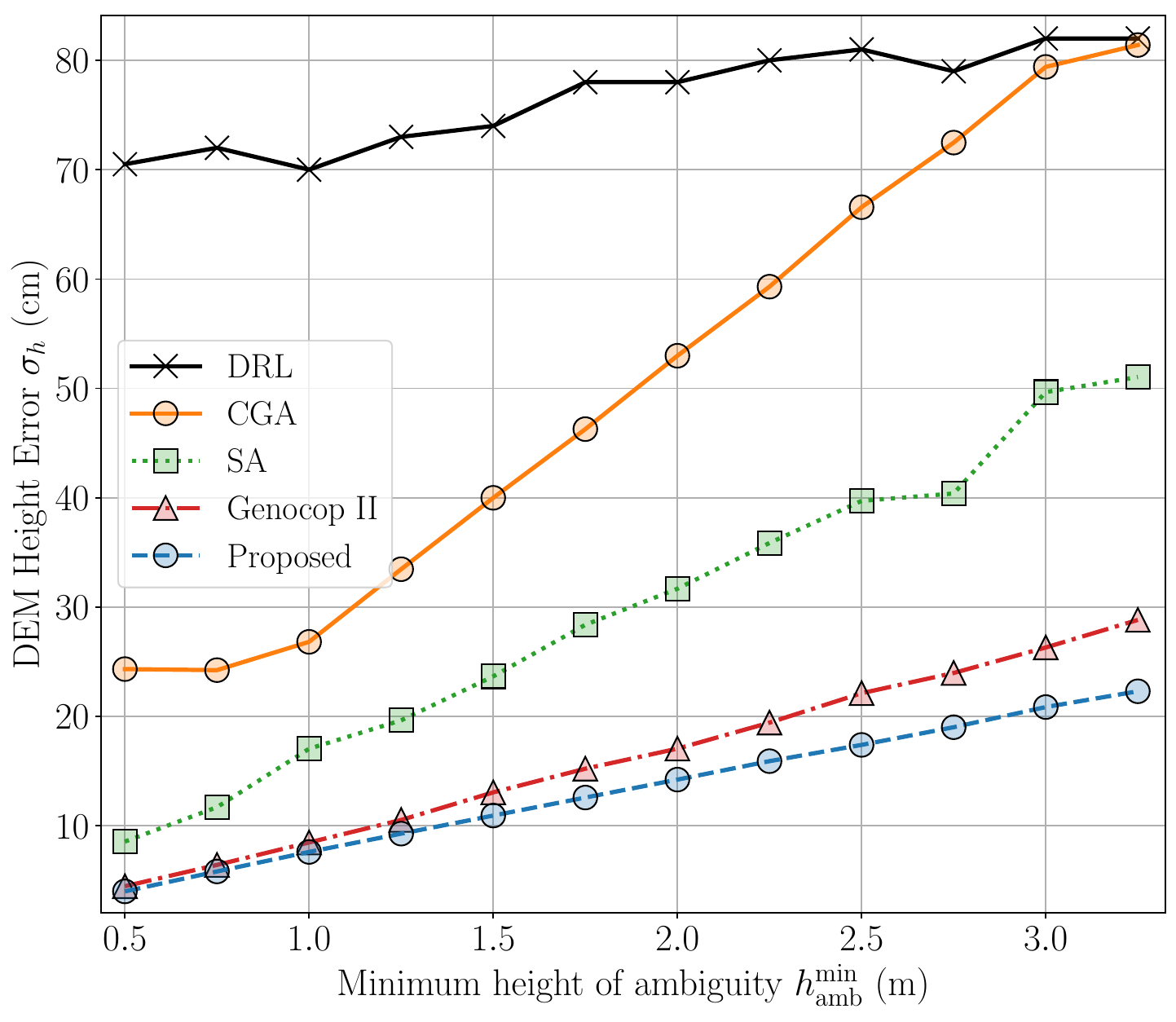}
	\else
	\includegraphics[width=0.9\columnwidth]{figures/error_vs_hmin.pdf}
	\fi
	\caption{Height error of the final \ac{dem} versus the minimum \ac{hoa} for $I=3$.}
	\label{fig:error_vs_hmin}
\end{figure} 
Figure \ref{fig:error_vs_hmin} illustrates the height error of the final \ac{dem} achieved by the proposed solution and all benchmark schemes for different minimum \acp{hoa}, $h_{\mathrm{amb}}^{\mathrm{min}}$. As expected, increasing the \ac{hoa}, which enhances the phase unwrapping robustness, comes at the cost of a larger height error, $\sigma_h$, as shown in (\ref{eq:height_error}). Notably, the \ac{ddpg} algorithm performs the worst among all benchmark schemes. In fact, fine-tuning the hyperparameters of the \ac{ddpg} agent is a complex and sensitive task, requiring substantial training and multiple adjustments. This makes it prone to instability and poor scalability, especially as the problem size (e.g., \ac{uav} network size $I$) grows. As training the agent for each change in system parameters is time-consuming, we omit the results from this benchmark in subsequent plots. The \ac{cga} and \ac{sa} benchmark schemes outperform \ac{drl}, but still fall short of Genocop II and the proposed solution. In fact, these algorithms do not utilize evolutionary methods, limiting their ability to locate high-quality solutions. In contrast, Genocop II yields the best performance among all benchmark schemes, achieving acceptable height error values due to its evolutionary approach and effective handling of constraints. However, this algorithm is computationally intensive, given its complex mutation and crossover operations. Furthermore, Figure \ref{fig:error_vs_hmin} confirms that the proposed solution delivers the best performance, achieving height error values between 5 and 20 cm {for the considered} $h_{\mathrm{amb}}^{\mathrm{min}}$ values. This performance is attributed to the co-evolutionary process, which reduces variable coupling by decomposing the problem and enables collaborative and parallel exploration of different search spaces, leading to superior accuracy in terms of the height error $\sigma_h$.
\begin{figure}[H]
	\centering
	\ifonecolumn
	\includegraphics[width=4in]{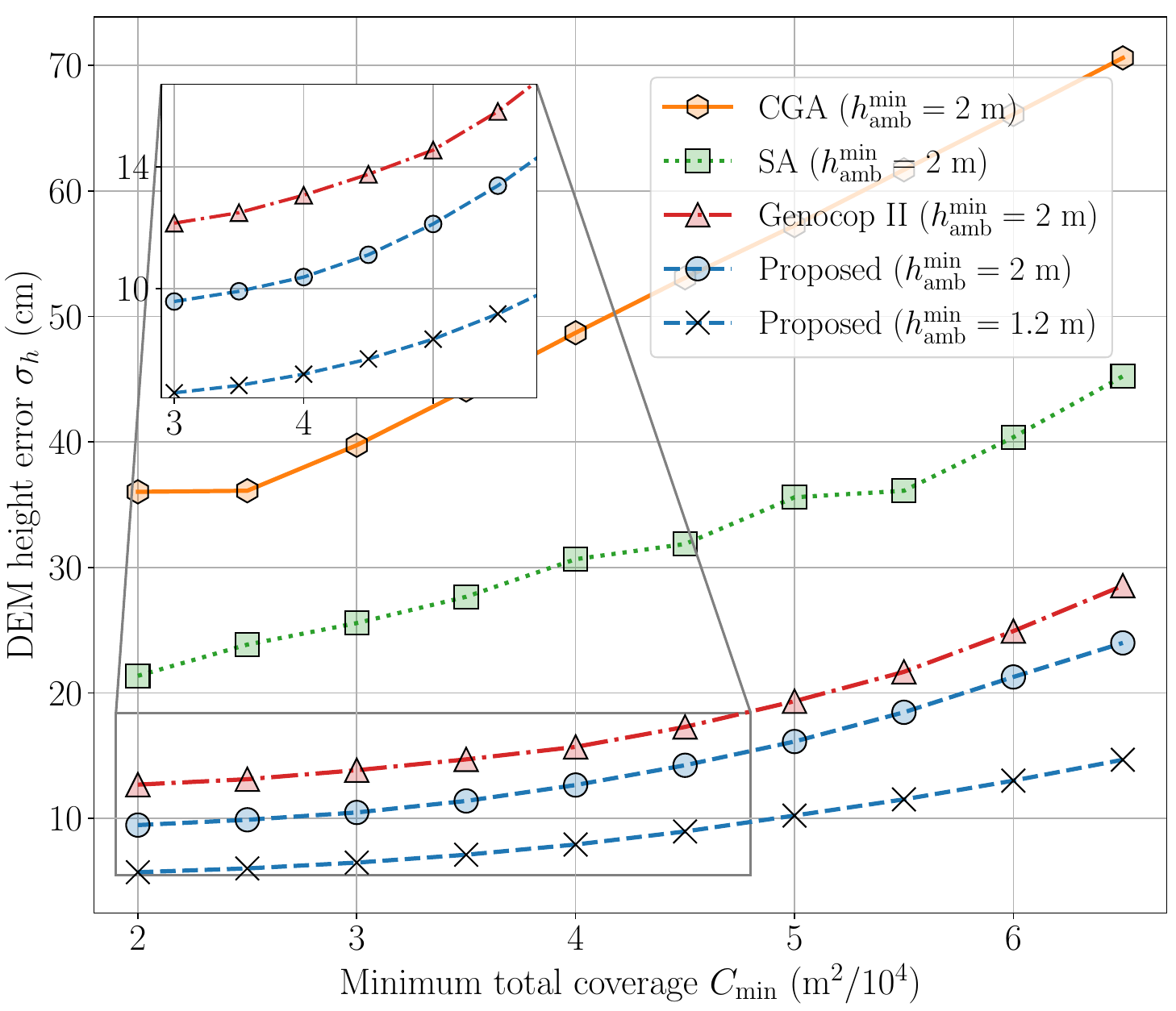}
	\else
	\includegraphics[width=0.9\columnwidth]{figures/error_vs_smin.pdf}
	\fi
	\caption{Height error of the final \ac{dem} versus total coverage requirement for $I=3$.}
	\label{fig:error_vs_smin}
\end{figure} 
Figure \ref{fig:error_vs_smin} illustrates the height error $\sigma_h$ versus the total required \ac{sar} coverage $C_{\mathrm{min}}$ achieved by the proposed solution and benchmark schemes \ac{cga}, \ac{sa}, and \ac{genocop}. The figure reveals that an increase in the coverage requirement leads to degraded sensing performance. This relationship can be attributed to several factors that influence the \ac{uav} trajectories. Specifically, to meet the increased coverage requirement, adjustments are made to both the velocity and radar swath width, which directly impact the height error. Additionally, as the coverage requirement increases, the data rates $R_{\mathrm{min},i}$ for some \acp{uav} $U_i$ also increase. Consequently, some drones adjust their positions to improve their channel quality and {to} satisfy the data offloading constraints. Furthermore, the proposed solution achieves superior performance for lower minimum \ac{hoa} values, which aligns with the trend observed in Figure \ref{fig:error_vs_hmin}. Moreover, the proposed solution outperforms all other benchmark schemes, including \ac{genocop}, which achieves the best performance among the benchmarks. In particular, the proposed solution shows a performance gain of 15\% for $C_{\mathrm{min}} = 6.5 \times 10^4$ m$^2$ and up to 25\% for $C_{\mathrm{min}} = 2 \times 10^4$ m$^2$.\par
	\begin{figure}[H]
	\centering
	\ifonecolumn
	\includegraphics[width=4in]{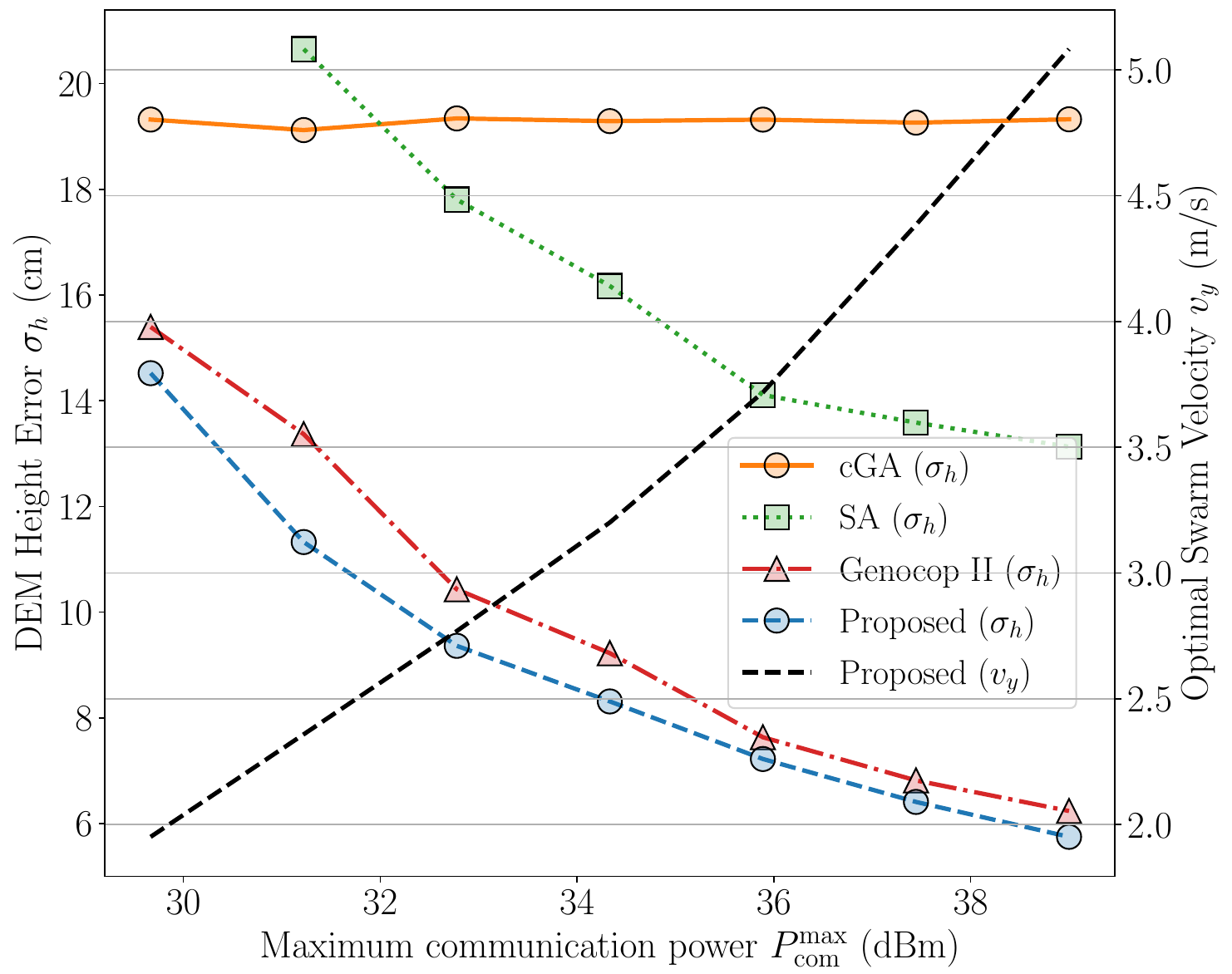}
	\else
	\includegraphics[width=0.9\columnwidth]{figures/error_vs_pcom.pdf}
	\fi
	\caption{Height error of the final \ac{dem} (left axis) and optimal \ac{uav} swarm velocity (right axis) versus the maximum communication power $P_{\mathrm{com}}^{\mathrm{max}}$ for a network size $I=5$. The \ac{sa} scheme fails to find a feasible solution for $P_{\mathrm{com}}^{\mathrm{max}} < 31.5~\text{dBm}$, while no scheme can find a feasible solution for $P_{\mathrm{com}}^{\mathrm{max}} < 29.5~\text{dBm}$.}
	\label{fig:error_vs_pcom}
\end{figure}
In Figure \ref{fig:error_vs_pcom}, we present the height error of the final \ac{dem} as well as the optimal \ac{uav} velocity obtained by the proposed solution, plotted against the maximum communication power used for offloading the radar data. For reduced $P_{\mathrm{com}}^{\mathrm{max}}$ values, the air-to-ground channel capacity deteriorates. To accommodate the data rate requirement in this case, the proposed co-evolutionary algorithm dynamically adjusts the \ac{uav} formation and lowers the swarm velocity to keep the drones in regions with feasible air-to-ground offloading while ensuring the minimum total coverage. This alters the \ac{insar} baselines and results in degraded sensing accuracy, leading to the conclusion that stringent communication requirements reduce sensing precision. For higher $P_{\mathrm{com}}^{\mathrm{max}}$, the communication reach of the drones is improved. The proposed solution then increases the velocity of the swarm to meet the minimum required total coverage and benefits from greater flexibility in placing the drones in the across-track plane, which yields larger baselines and improved accuracy. This figure unveils an interesting trade-off between sensing and communication in the considered system.
\subsection{Optimal UAV Formation} 
\begin{table}[H]
		\centering
\caption{Optimal values of key system parameters. The count denotes the number of parameters optimized \ac{wrt} the number of drones in the network, e.g., $I=5$ altitude values are optimized for a network with $I=5$ drones. For each parameter, the minimum, maximum, and mean values are computed \ac{wrt} the count.}
\label{tab:optimal_parameters}
\begin{adjustbox}{max width=\columnwidth}
		\begin{tabular}{|c|c|ccc|}
			\hline
			\rowcolor[HTML]{EFEFEF} 
			\textbf{Parameter (unit)}  & \textbf{Count} & \multicolumn{1}{c|}{\cellcolor[HTML]{EFEFEF}\textbf{Min}} & \multicolumn{1}{c|}{\cellcolor[HTML]{EFEFEF}\textbf{Max}} & \cellcolor[HTML]{EFEFEF}\textbf{Mean} \\ \hline
			Altitude (m)               & $I=5$            & \multicolumn{1}{c|}{30.8}                                 & \multicolumn{1}{c|}{64.6}                                 & 44                                    \\ \hline
			Cross-range (m)            & $I=5$            & \multicolumn{1}{c|}{-49}                                  & \multicolumn{1}{c|}{-12.6}                                & -24.4                                 \\ \hline
			Swath (m)                  & $I=5$            & \multicolumn{1}{c|}{45.7}                                 & \multicolumn{1}{c|}{96.5}                                 & 62.2                                  \\ \hline
			Baseline (m)               & $\frac{I(I-1)}{2}=10$  & \multicolumn{1}{c|}{4.6}                                  & \multicolumn{1}{c|}{49.5}                                 & 24.6                                  \\ \hline
			Perpendicular baseline (m) & $\frac{I(I-1)}{2}=10$            & \multicolumn{1}{c|}{0.5}                                  & \multicolumn{1}{c|}{6.6}                                  & 3.6                                   \\ \hline
			Velocity (m/s)             & $1$              & \multicolumn{3}{c|}{4.5}                                                                                                                                      \\ \hline
		\end{tabular}%
\end{adjustbox}
\end{table}
In Table \ref{tab:optimal_parameters}, we summarize some of the key system parameters optimized with \textbf{Algorithm} \ref{alg:coevolution}. The results show that the \ac{uav} network adopts a wide range of baselines to achieve optimal performance, with the perpendicular baseline varying from 0.5 m to 6.6 m. The \ac{uav} swarm operates at an  altitude of around 44 m, reaching a maximum altitude of 64.6 m, and moves at 4.5 m/s along the $y$-direction. The table also confirms that the collision-avoidance constraint is satisfied, as the minimum baseline is larger than 4.5 m.
\subsection{Effect of Network Size}
	 \begin{figure}
	 	\centering
	 	\ifonecolumn
	 		\includegraphics[width=4in]{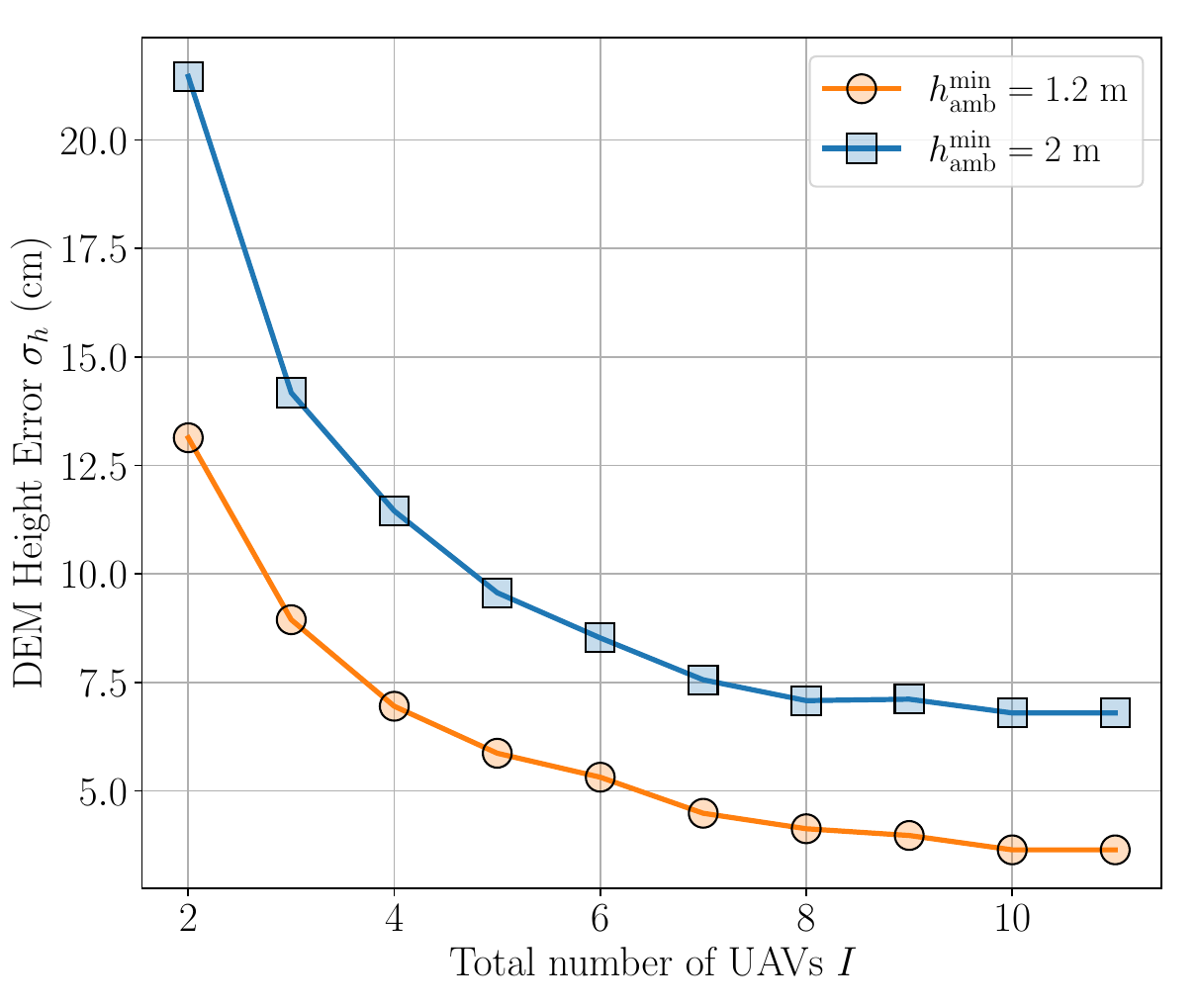}
	 	\else
	 \includegraphics[width=0.9\columnwidth]{figures/error_vs_number_drones.pdf}
	 	\fi
	 		\caption{Height error of the final \ac{dem} versus the total number of drones $I$ for different \acp{hoa}.}
	 		\label{fig:error_vs_number_drones}
	 	\end{figure} 
Figure \ref{fig:error_vs_number_drones} investigates the effect of the total number of drones $I$ on the sensing performance achieved by the proposed co-evolutionary algorithm for different minimum \ac{hoa} values. Notably, the precision enhancement from multi-baseline \ac{insar}, i.e., $I \geq 3$, is substantial when compared to single-baseline \ac{insar} ($I = 2$). In addition, the figure shows an improvement in the sensing performance as the number of drones {increases}, but this improvement progressively diminishes and eventually {saturates} as $I$ grows. In practice, beyond $I = 8$ sensing platforms, there is no significant reduction in height error for both $h_{\mathrm{amb}}^{\mathrm{min}} = 1.2$ m and $h_{\mathrm{amb}}^{\mathrm{min}} = 2$ m. The reduction in height error for a low number of drones is attributed to the averaging of multiple acquisitions, see (\ref{eq:height_error}). When the number of sensors is high, the aggregate contribution of the additional drones becomes negligible, as the height error of the averaged \ac{dem} is primarily dominated by the best individual measurement. Yet, the enhanced baseline diversity introduced by additional \ac{uav}s enables more efficient phase unwrapping by utilizing the smallest baselines for this task, while the larger baselines are employed to improve the height error. Figure \ref{fig:error_vs_number_drones} highlights the potential of \ac{uav} swarms to enhance \ac{sar} sensing capabilities and reveals the optimal network size necessary to achieve a specific sensing performance.
\section{Conclusion} \label{sec:conclusion}
In this paper, we studied \ac{uav}-based multi-baseline \ac{insar} sensing for generating high-precision \acp{dem}, where an arbitrary number of drones are deployed to create independent \acp{dem} of a target area. The final DEM is derived through a weighted averaging technique, thereby improving sensing accuracy. We propose a novel co-evolutionary algorithm that jointly optimizes the drone formation, swarm velocity, and communication resource allocation to minimize the height error of the final DEM. Numerical simulations demonstrate that the proposed algorithm outperforms several {benchmark schemes}, including genetic algorithms, simulated annealing, and deep reinforcement learning, in terms of sensing accuracy. Furthermore, we {unveiled} the potential of coordinated \ac{uav} swarms for achieving centimetric precision for \ac{insar} applications. Future work will focus on further refining the optimization strategy to handle dynamic environmental conditions and enable real-time adaptation using online optimization frameworks.
	\bibliographystyle{IEEEtran}
	\bibliography{biblio}
	
\end{document}